\documentclass{article}
\usepackage{authblk}
\usepackage[utf8]{inputenc} 
\usepackage[T1]{fontenc}    
\usepackage{url}           
\usepackage{booktabs}       
\usepackage{amsfonts}       
\usepackage{nicefrac}       
\usepackage{microtype}      
\usepackage{amssymb,amsmath,amsthm}
\usepackage{changepage}
\usepackage{xcolor,graphicx}
\usepackage{bm,enumerate}
\usepackage{alltt,listings,tikz}
\usepackage[all]{xy}
\usepackage{hyperref}
\usepgflibrary{shadings}

\usepackage{geometry}
\hypersetup{
    colorlinks,
    linkcolor={black},
    citecolor={black},
    urlcolor={black}
}
\usepackage{epstopdf, epsfig}
\newtheorem{lemma}{Lemma}

\newtheorem{theorem}{Theorem}
\usepackage{tabularx}
\usepackage{bigints}
\usepackage{comment}
\usepackage{soul}
\usepackage{array}
\usepackage{makecell}
\usepackage{natbib}
\usepackage{physics}
\usepackage{todo}
\usepackage{multirow}
\usepackage{tabularx}
\usepackage{enumerate}
\usepackage{mathrsfs}
\usepackage{algorithm}
\usepackage{algorithmic}

\newcommand{\red}[1]{\ifmmode\mathbf{\textcolor{red}{#1}}\else \textbf{\textcolor{red}{#1}}\fi}
\newcommand{\blue}[1]{\ifmmode\mathbf{\textcolor{blue}{#1}}\else \textbf{\textcolor{blue}{#1}}\fi}
\usepackage{xcolor}

\usepackage{mathtools}
\usepackage{cancel}

\newtheorem{remark}[theorem]{Remark}
\newtheorem{claim}[theorem]{Claim}

\newcommand{\ud}{\,\mathrm{d}}
\newcommand{\bfu}{\mathbf{u}}

\newcommand{\bfx}{\mathbf{x}}
\newcommand{\bfE}{\mathbf{E}}

\newcommand{\bfn}{\mathbf{n}}

\newcommand{\bfw}{\mathbf{w}}
\newcommand{\bfI}{\mathbf{I}}
\newcommand{\bfD}{\mathbf{D}}
\newcommand{\bfT}{\mathbf{T}}
\newcommand{\bfJ}{\mathbf{J}}

\newcommand{\bfb}{\mathbf{b}}

\newcommand{\bfnu}{\boldsymbol{\nu}}
\newcommand{\bftau}{\boldsymbol{\tau}}
\newcommand{\vtan}{\bfv_{\mathrm{tng}}}

\newcommand{\divv}{\nabla\!\cdot\!}

\newcommand{\bfTM}{\mathbf{T}_{\mathrm{M}}}

\newcommand{\bff}{\mathbf{f}}

\newcommand{\bfv}{\mathbf{v}}

\newcommand{\grads}{\nabla_S}

\newcommand{\RR}{\mathbb{R}}
\newcommand{\calF}{\mathcal{F}}
\newcommand{\calD}{\mathcal{D}}
\newcommand{\calR}{\mathcal{R}}
\newcommand{\calM}{\mathcal{M}}
\newcommand{\calV}{\mathcal{V}}
\newcommand{\calQ}{\mathcal{Q}}

\newcommand{\GamE}{{\Gamma}}

\newcommand{\cl}{\scriptscriptstyle{\text{CL}}}

\newcommand{\tng}{\scriptscriptstyle{\text{tng}}}

\newcommand{\divS}{\operatorname{div}_S}
\newcommand{\DtS}{D_t^S}
\newcommand{\dSsurf}{\,\ud\mathcal{H}^2}

\title{Energetic variational formulation for electrohydrodynamics of surfactant-laden droplets}

\author[1]{Hangjie Ji}
\author[2]{Jian-Guo Liu}

\affil[1]{Department of Mathematics, North Carolina State University, Raleigh, North Carolina 27695, USA \\
\texttt{hangjie\_ji@ncsu.edu}}

\affil[2]{Department of Mathematics, Duke University, Durham, NC 27708, USA\\
\texttt{jian-guo.liu@duke.edu}}

\date{}

\begin{document}

\maketitle

\begin{abstract}
The coupling between surfactant-laden droplet dynamics and electric fields plays an important role in liquid-handling technologies such as digital microfluidics. We develop an energetic variational framework for the coupled electrohydrodynamics of surfactant-laden droplets, incorporating two-phase Stokes flow, insoluble surfactant transport on a moving interface, and electrostatic effects. Based on Onsager's principle, the governing equations are derived by minimizing the Rayleighian, defined as the sum of the rate of change of the free energy and the dissipation functional, subject to the incompressibility constraint. This formulation simultaneously yields the Stokes equations in each bulk phase, the interfacial stress-balance condition incorporating Marangoni and Maxwell stresses, the electrostatic equation, the surface transport equation for the surfactant concentration, and the moving contact-line dynamics. By replacing the viscous dissipation functional with Rayleigh dissipation, we further derive a reduced model describing the evolution of surfactant-laden droplets driven by motion by mean curvature. For sessile droplets represented as graph surfaces, the model reduces to a one-dimensional coupled electrohydrodynamic model for the liquid height, surfactant concentration, and electrostatic potential. A first-order implicit-explicit numerical scheme is proposed for the reduced system, and numerical results illustrate the coupled effects of surfactant transport and electric fields on driving droplet deformation and migration.
\end{abstract}

\noindent\textbf{Keywords:}
two-phase Stokes flow; surfactant; Maxwell stress; energetic variational principle; Onsager principle; moving interface; surface transport.


\section{Introduction}
\label{sec:intro}
Surfactant-mediated droplet dynamics under electric fields have attracted considerable attention due to their important applications in digital microfluidics, electrowetting systems, and biomedical technologies \citep{kim2001micropumping,abdelgawad2009digital,choi2012digital}. Surfactant molecules adsorbed on fluid interfaces alter local surface tension and generate Marangoni stresses through concentration gradients \citep{Stone1990}, while externally imposed electric fields induce Maxwell stresses that can drive droplet deformation, migration, and interfacial instabilities \citep{vlahovska2019electrohydrodynamics}. The interaction of these mechanisms gives rise to a rich variety of electrically driven phenomena, such as directional transport, wetting/dewetting, and droplet coalescence.

Early studies of electrohydrodynamics focused primarily on the deformation and motion of clean droplets subjected to electric fields, where electrostatic forcing interacts with viscous and capillary effects \citep{Taylor1966,Melcher1969,vlahovska2019electrohydrodynamics}. More recent developments incorporated surfactant transport into electrohydrodynamic systems and demonstrated that surfactant redistribution can significantly modify interfacial stresses through Marangoni effects and strongly influence droplet deformation and stability \citep{poddar2018sedimentation,poddar2019electrical,nganguia2019effects,nganguia2021effects}. Various mathematical frameworks have been proposed to model these coupled phenomena, including formulations based on leaky-dielectric theory \citep{Saville1997,papageorgiou2019film}, phase-field approaches for electrowetting \citep{eck2009phase}, and lubrication-type approximations for thin film droplets \citep{craster2009dynamics,chu2023electrohydrodynamics}. On the analytical side, \cite{fontelos2013uniqueness} established uniqueness and regularity results for a phase-field model for electrowetting. Numerical studies of electrohydrodynamics with surfactant transport have been carried out using level-set formulations \citep{teigen2010influence}, immersed interface methods \citep{nganguia2015immersed}, and boundary integral methods \citep{sorgentone20193d,zhao2021finite}.

The contact-line dynamics of droplets on solid substrates is a fundamental aspect of modeling immiscible two-phase flow systems \citep{snoeijer2013moving}. Diffuse-interface approaches, often implemented through phase-field models \citep{gao2012gradient,gao2014efficient}, have been widely used to regularize the moving-contact-line singularity by replacing the sharp liquid-air interface with a thin transition layer and incorporating generalized boundary conditions \citep{qian2003generalized,xu2018sharp,zhang2022effective}. In contrast, the present work adopts a sharp interface formulation, in which the liquid-air interface and contact line are tracked explicitly. 

Despite substantial progress, several challenges remain in developing a unified framework for surfactant-mediated electrohydrodynamic systems. Existing lubrication-based formulations typically rely on long-wave assumptions and are therefore restricted to thin-film geometries or small interfacial slopes. Furthermore, moving contact-line dynamics within these formulations generally require additional regularization mechanisms, such as precursor layer models or slip boundary conditions \citep{de1985wetting, bonn2009wetting}. Electrostatic effects are also modeled using a variety of approaches, ranging from prescribed surface charge distributions to interfacial charge conservation laws \citep{Saville1997}, causing the coupling mechanisms to depend on the underlying modeling assumptions. Moreover, many existing formulations derive individual governing equations separately, which makes it difficult to systematically enforce consistency with energy dissipation principles. These limitations motivate us to develop a unified framework capable of consistently incorporating surfactant transport, electrostatic effects, and moving contact-line dynamics.

The energetic variational approach, formulated through minimization of the Rayleighian via the \emph{Onsager variational principle} \citep{Doi2011,Doi2013}, provides a natural pathway for constructing such a framework while ensuring thermodynamic consistency and an associated energy-dissipation law. 
For a dissipative continuum system with state variables $\calQ$ and tangent vector $\calV$, the rate of change of the free energy ${\dd\calF}/{\dd t}$ can be expressed as the inner product between the tangent vector $\calV$ and the corresponding unbalanced force $\mathbf{F}$, ${\dd}\calF/{\dd t} = \left<\calV,\mathbf{F}\right>$. The associated quadratic Rayleigh dissipation functional is defined as $\calD = |\calV|^2/{(2\calM)}$, where $\calM$ denotes the mobility coefficient. 
The system evolution is then determined by minimizing the Rayleighian,
\begin{equation}
  \label{eq:R_def_general}
  \calR[\calV;\calQ]
  = \frac{\dd}{\dd t}\calF[\calV;\calQ]
    + \calD[\calV;\calQ]
\end{equation}
with respect to admissible tangent vectors $\mathcal{V}$.
The minimizer of the Rayleighian satisfies the energy-dissipation identity
\begin{equation}
  \label{eq:energy_id}
  \frac{\dd}{\dd t}\calF
  = -2\calD.
\end{equation}

The energetic variational approach has been applied to moving contact-line hydrodynamics \citep{qian2006variational,xu2016variational} and phase-field models for various applications \citep{feng2005energetic,du2009energetic,liu2020variational}. For example, \cite{eck2009phase} developed a phase-field model coupled with the Navier-Stokes system for electrowetting.
Recently, energetic variational formulations have also been applied to contact line and droplet dynamics on structured and textured substrates \citep{GaoLiu2021IFB,GaoLiu2021PhysD,GaoLiu2022SIAM}
. 

In the present work, we apply the Onsager's variational principle to formulate the electrohydrodynamics of surfactant-laden droplets. To achieve this, appropriate forms of the free energy $\calF$, the associated unbalanced forces $\mathbf{F}$, and the tangent vectors $\calV$ must be identified.
One of the main challenges in developing such an energetic variational formulation lies in the consistent coupling of multiple physical mechanisms within a unified framework. In particular, we incorporate surface stresses, Maxwell stresses, moving contact-line dynamics, and surface diffusion along evolving interfaces into the system. The transport of insoluble surfactant, together with system constraints such as the incompressibility condition for the Stokes flow, is incorporated directly into the Rayleighian to ensure the thermodynamic consistency of the resulting coupled system. 
The droplet interface evolves in time and carries insoluble surfactant, leading to additional geometric difficulties. Incorporating Maxwell stresses through variations of the electrostatic energy is also a key aspect of the formulation. Furthermore, the choice of appropriate dissipation functionals is critical for closing the system.

The rest of the paper is structured as follows: Section~\ref{sec:setup} introduces the problem setup. Section~\ref{sec:surfactant_transport} discusses the transport of surfactant along the moving interface via the Reynolds transport theorem.
Section~\ref{sec:energy} defines the free energy functional and computes its
variations. Section~\ref{sec:dissipation_Rayleighian} introduces the dissipation functional and Rayleighian and derives the coupled system through Onsager's principle. Section~\ref{sec:reduced_Rayleigh} introduces a reduced coupled system by replacing the viscous dissipation functional with the Rayleigh dissipation. The resulting coupled system is further reduced to a one-dimensional system by assuming a graph representation of the moving interface, followed by numerical results.
Finally, Section~\ref{sec:conclusion} concludes the paper and provides further discussion.

\section{Problem setup}
\label{sec:setup}
In this section, we present the problem setup for a two-phase Stokes flow in the zero-Reynolds-number regime with insoluble surfactant distributed along a moving interface separating two immiscible phases in the presence of electrostatics. We first introduce the geometric configuration of the problem and then describe the physical quantities involved in the system, including the surface energy density, electrostatic potential, and surface charge density. Similar to previous studies on thin films with electrostatic effects \citep{wray2014electrostatic,wray2022electrostatic}, we neglect intermolecular forces and gravity in the present setting.

\begin{figure}
    \centering
\includegraphics[width=0.6\linewidth]{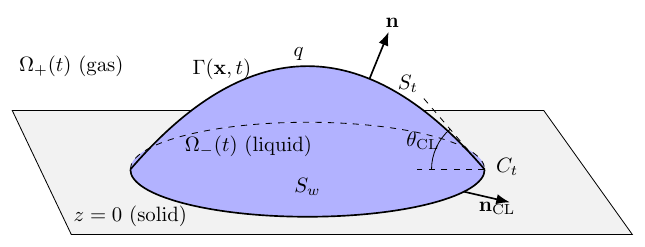}
\includegraphics[width=0.35\linewidth]{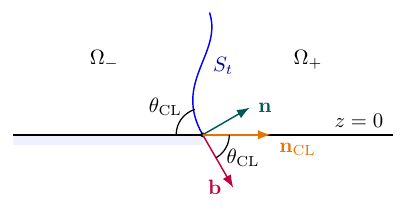}
    \caption{(Left) Schematic of a sessile drop on a solid substrate. The liquid domain $\Omega_-$ rests on the substrate $z = 0$ and is surrounded by the gas domain $\Omega_+$. The free surface $S_t$ meets the substrate at the contact line $\mathcal{C}_t$. The unit normal vector $\bfn$ on $S_t$ points outward from $\Omega_-$ into $\Omega_+$. The dynamic contact angle $\theta_{\cl}$ is measured inside the drop between the substrate and the tangent to $S_t$ at $\mathcal{C}_t$.
    (Right) Detailed contact-line geometry. 
    The co-normal vector $\bfb$, the normal vector $\bfn$, and the contact angle $\theta_{\cl}$ satisfy the relation \eqref{eq:contact_angle_def}.
    }
    \label{fig:geometry}
\end{figure}

Let $\Omega \subset \RR^3_{+}=\{(x,y,z), z>0\}$ be a bounded domain with smooth boundary $\partial\Omega$. For each time $t \ge 0$, let $S_t$ denote a capillary free surface that represents the interface of a sessile droplet placed on the solid substrate $z = 0$. The boundary of $S_t$, denoted by $C_t = \partial S_t$, is assumed to be a smooth closed curve on the substrate and represents the \emph{triple junction} or the \emph{contact line} where the liquid-air interface intersects with the solid substrate. The region enclosed by $C_t$, denoted by $S_w(t)$, represents the wetting domain, namely the liquid-solid interface. The union of the free surface $S_t$ and the wetting domain $S_w(t)$,
 $S_t\cup S_w(t)$, forms a closed surface enclosing the liquid phase
contained in the interior of $\Omega$. This closed surface partitions $\Omega$
into two open subdomains,
$
  \Omega = \Omega_+(t) \cup \Omega_-(t) \cup S_t,
$
where $\Omega_-(t)$ is bounded and represents the liquid domain, while
$\Omega_+(t)$ denotes its complement within
$\Omega$ and represents the gas domains.
This configuration describes a sessile droplet on a flat substrate with contact angle $\theta_{\cl}$, defined as the angle inside the droplet between the substrate $z = 0$ and the tangent plane to the liquid-gas interface $S_t$ at the contact line $C_t$ (see Figure~\ref{fig:geometry} (left)).

Let $\bfu : (\Omega_+\cup \Omega_-) \times [0,T] \to \RR^3$ denote the fluid velocity, $p : (\Omega_+\cup \Omega_-) \times [0,T] \to \RR$ the fluid pressure, and $\phi : \RR^3_{+} \times [0,T] \to \RR$ the electrostatic potential with a prescribed boundary potential $\phi_0$ on the substrate $z = 0$.
Along the interface, we define
$\GamE : S_t \times [0,T] \to [0,\infty)$ as the interfacial surfactant concentration. The interface $S_t$ is the central geometric object of the problem, and $\GamE$ is an intrinsic scalar field defined on $S_t$ representing the surfactant concentration at the interface.
The material properties in both liquid and gas phases are assumed to be constant, including the dynamic viscosity $\mu_\pm > 0$ and electric permittivity $\varepsilon_\pm > 0$ in $\Omega_\pm$.

The contact-line dynamics plays a crucial role in the evolution of the surfactant-mediated free interface. Let $\bfn_{\cl}$ denote the outward unit normal to the contact line $C_t$ within the substrate $z = 0$. The normal speed of the contact line is defined by $v_{\cl} = \bfu|_{C_t}\cdot\bfn_{\cl}$, which represents the outward normal velocity of the contact line along the substrate. 
The unit normal vector on $S_t$ pointing from $\Omega_-(t)$ into $\Omega_+(t)$ is denoted as $\bfn = \bfn(t)$.
Along the contact line $C_t$, let $\bftau_{\cl}$ denote the positively oriented unit tangent vector to $C_t$ on the substrate, and let $\bfb$ denote the outward unit co-normal vector to
  $C_t$, orthogonal to both $\bftau_{\cl}$ and $\bfn$ (see Figure~\ref{fig:geometry} (right)).
  
The contact angle $\theta_{\cl}\in(0,\pi)$ is defined through
\begin{equation}\label{eq:contact_angle_def}
  \mathbf{b}\cdot \bfn_{\cl}=\cos\theta_{\cl}.
\end{equation}
The kinematic conditions governing the contact-line dynamics require that the free surface $S_t$ remain attached to the substrate $z = 0$ and that the contact-line velocity agree in the direction $\bfn_{\cl}$.
Consequently, the contact line moves within the substrate $z = 0$, and its instantaneous velocity can be decomposed as
\begin{equation}
\label{eq:contact_velocity_vector}
  \bfu|_{C_t}=v_{\cl}\,\bfn_{\cl} + w\,{\bftau_{\cl}},
\end{equation}
where $w$ denotes the tangential speed along the contact line.
Combining \eqref{eq:contact_angle_def} and \eqref{eq:contact_velocity_vector}, together with the orthogonality relation $\bftau_{\cl}\perp \bfb$, yields the compatibility condition at the contact line,
\begin{equation}
    \bfu|_{C_t} \cdot \bfb = v_{\cl}\cos\theta_{\cl}.
\label{eq:contact_compatibility}
\end{equation}

The surface energy density associated with the liquid-air interface is denoted by $e(\GamE)$, and the corresponding chemical potential and surface tension are denoted by $\mu(\GamE)$ and $\gamma(\GamE)$, respectively.
Specifically, $e(\GamE)$, $\mu(\GamE)$, and $\gamma(\GamE)$ satisfy the relation \citep{Doi2013,GaoLiu2021PhysD},
\begin{equation}
\gamma(\GamE) = e(\GamE) - e'(\GamE)\GamE, \quad \mu = e'(\Gamma),
\label{eq:surface_energy_chemical}
\end{equation}
We assume that the surface energy density $e(\GamE)$ is a convex function. Consequently, the chemical potential $\mu(\Gamma)$ is an increasing function with respect to $\GamE$, and the surface tension $\gamma(\GamE)$ decreases as the surfactant concentration $\GamE$ increases.
A typical choice of $\gamma(\GamE)$, $e(\GamE)$, and $\mu(\Gamma)$ is obtained from the Langmuir equation, defined as
\begin{align}
    \gamma(\GamE) &= \gamma_0 + \Gamma_s k T \ln\left(1-\frac{\Gamma}{\Gamma_s}\right), \\
    e(\GamE) &= \gamma_0 + kT\left[(\Gamma_s-\GamE)\ln(\Gamma_s-\GamE)+\GamE\ln\GamE-\Gamma_s\ln\Gamma_s\right],
    \label{eq:surface_energy_density}
    \\
    \mu(\GamE) &= k T\ln\frac{\GamE}{\GamE_s-\GamE},
\end{align}
where $\gamma_0$ is the reference surface tension in the absence of surfactant, $\Gamma_s$ is the saturated surfactant concentration, $k$ is the Boltzmann constant, and $T$ is the absolute temperature.
The surface tension coefficients associated with the solid-liquid and solid-gas interfaces, denoted by $\gamma_{SL}$ and $\gamma_{SG}$, are assumed to be constant.

The electrostatic potential $\phi$ determines the electric field through $\bfE = -\grad\phi$
and therefore governs the associated Maxwell stress. 
We assume the absence of free charges in the bulk phases. Consequently, the electric potentials $\phi_{\pm}$ in $\Omega_{\pm}$ satisfy the Laplace equation,
\begin{equation}
    \divv(\varepsilon_{\pm}\grad\phi_{\pm}) = 0
\quad\text{in }\Omega_\pm(t).
\label{eq:laplace}
\end{equation}
The surface charge density $q$ along the interface $S_t$ satisfies the jump condition implied by the Gauss law,
\begin{equation}
    [\varepsilon\partial_n\phi] = \varepsilon_+\partial_n\phi_+ - \varepsilon_-\partial_n\phi_- = q \quad \mbox{on } S_t,
\label{eq:jump_q}
\end{equation}
where the notation $[f]$ denotes the \emph{jump} of the quantity $f$ across the interface $S_t$, namely
\[
  [f] \coloneqq f\big|_{\partial\Omega_+} - f\big|_{\partial\Omega_-},
\]
with the traces taken from the respective sides of the interface. We further assume that both the velocity $\bfu$ and the electrostatic potential $\phi$ are continuous across the interface $S_t$,
\begin{equation}
    [\phi] = 0, \quad [\bfu] = 0.
\label{eq:continuity_phi_u}
\end{equation}

The surface charge density $q(\Gamma)$ is assumed to depend on the surfactant concentration $\Gamma$. A natural choice is the linear relation
\begin{equation}
    q(\Gamma) = A\Gamma,
\label{eq:q_linear_Gamma}
\end{equation}
where $A$ is a constant. This relation expresses the assumption that the surface charge density is proportional to the local surfactant concentration along the interface.

\section{Kinematic description of surfactant transport}
\label{sec:surfactant_transport}
Next, we discuss the transport of insoluble surfactant along the moving interface $S_t$, whose evolution is driven by the velocity field $\bfu$. 
Let $\bfJ$ denote the diffusive flux of surfactant on the interface $S_t$. 
The dynamics of the surfactant concentration $\Gamma$ are influenced by both local changes in the surface area of $S_t$ due to expansion or contraction and the diffusion represented by $\bfJ$.

Let $A_t \subset S_t$ be an arbitrary material surface patch transported by the flow, i.e., each point on $A_t$ moves with velocity $\bfu$.
Conservation of surfactant mass within the patch $A_t$ gives
\begin{equation}
\frac{d}{dt}\int_{A_t} \Gamma \, \ud\mathcal{H}^2 = - \int_{\partial A_t} \bfJ \cdot \bfnu_S \, \ud\mathcal{H}^1 = 
- \int_{A_t} \operatorname{div}_S \bfJ \, \ud\mathcal{H}^2,
\label{eq:surfactant_conservation}
\end{equation}
where $\bfnu_S$ denotes the outward unit co-normal along $\partial A_t$, and the second equality follows from the surface divergence theorem.
Using the surface transport formula \eqref{eq:insoluble_derivative_patch} for the surfactant concentration $\Gamma$ (see Lemma \ref{ref:lemma_surface} in Appendix \ref{sec:appendix_surf}), we obtain
\begin{equation}
    \frac{d}{dt}\int_{A_t} \Gamma \, \ud\mathcal{H}^2
= \int_{A_t} \left(\partial_t \Gamma + \bfu \cdot \nabla_S \Gamma   + \Gamma \, \operatorname{div}_S \bfu\right) \, \ud\mathcal{H}^2.
\label{eq:surfaceTransport}
\end{equation}
Combining \eqref{eq:surfactant_conservation} and \eqref{eq:surfaceTransport} yields
\begin{equation}
\int_{A_t} \left(
\partial_t \Gamma +\bfu \cdot \nabla_S \Gamma + \Gamma \operatorname{div}_S \bfu + \operatorname{div}_S \bfJ
\right)\, \ud\mathcal{H}^2 = 0.
\label{eq:transport_diff_J_integral}
\end{equation}
Since $A_t$ is arbitrary, we deduce the transport equation for the surfactant concentration
\begin{equation}
\partial_t \Gamma + \bfu \cdot \nabla_S \Gamma + \Gamma \operatorname{div}_S \bfu = - \operatorname{div}_S \bfJ.
\label{eq:transport_surfactant}
\end{equation}
This transport equation can also be expressed in a conservative form as 
\begin{equation}
    \partial_t \Gamma + \operatorname{div}_S(\Gamma\bfu) = - \operatorname{div}_S \bfJ.
\label{eq:transport_surfactant_conserved}
\end{equation}

Integrating \eqref{eq:transport_surfactant_conserved} over the entire liquid-air interface, we obtain the rate of change of the total surfactant mass as
\begin{equation}
  \frac{\dd}{\dd t}\int_{S_t}\Gamma \dd \mathcal{H}^2 = -\int_{S_t} \operatorname{div}_S \bfJ ~ \dd \mathcal{H}^2 = -\int_{C_t} \bfJ \cdot \mathbf{b} \,\dd \mathcal{H}^1,
\end{equation}
where we have used the surface divergence theorem.
Since the insoluble surfactant is assumed to remain confined to the liquid-air interface $S_t$, conservation of total surfactant mass requires a zero net flux condition at the contact line $C_t$,
\begin{equation}
\bfJ \cdot \bfb = 0 \qquad \mbox{ on } C_t.  
\label{eq:noflux_bc_Gamma}
\end{equation}

To obtain a geometric interpretation of the transport equation, we decompose the velocity field into normal and tangential components,
\begin{equation}
\bfu = V \bfn + \bfv_{\mathrm{tng}},
\label{eq:velocity_decomp}
\end{equation}
where $\bfv_{\mathrm{tng}}$ denotes the tangential component of the velocity and $V = \bfu \cdot \bfn$ is the normal velocity.
Using the identity
\begin{equation}
\operatorname{div}_S \bfu = \operatorname{div}_S \bfv_{\mathrm{tng}} + \kappa V,
\label{eq:identity_div_u}
\end{equation}
where $\kappa = \operatorname{div}_S\bfn$ denotes the mean curvature of $S_t$,
we rewrite the transport equation \eqref{eq:transport_surfactant} in equivalent form with a geometric interpretation as
\begin{equation}
\partial_t \Gamma + \operatorname{div}_S(\Gamma \bfv_{\mathrm{tng}}) +  \kappa \Gamma V
= -\operatorname{div}_S \bfJ.
\label{eq:surfactant_transport_main}
\end{equation}
Equation \eqref{eq:surfactant_transport_main} reveals three mechanisms that govern the evolution of the surfactant concentration. The tangential advection term $\operatorname{div}_S(\Gamma \bfv_{\mathrm{tng}})$ redistributes the surfactant along the interface, the geometric stretching term $\kappa \Gamma V$ accounts for local expansion or compression of the interface induced by normal motion, and the flux term $-\operatorname{div}_S \bfJ$ represents surfactant diffusion.

The three forms of the transport equation, \eqref{eq:transport_surfactant}, \eqref{eq:transport_surfactant_conserved}, and \eqref{eq:surfactant_transport_main}, 
highlight different perspectives of the surfactant dynamics and will be useful in subsequent derivations. We next turn to the formulation of the free energy and dissipation functionals, which together constitute the Rayleighian framework.

\section{The free energy and its rate of change}
\label{sec:energy}
In this section, we first introduce the free energy of the system, which consists of surface energy and electrostatic energy contributions. We then derive the rate of change of these energy components in subsections ~\ref{sec:variations_surface} and \ref{sec:variations_electric}, respectively. Finally, a decomposition of the unbalanced forces identified by the rate of change of the free energy is presented in subsection~\ref{sec:unbalanced_forces}.

We consider the total free energy of the system,
\begin{equation}
  \label{eq:F_total}
  \calF[S_t, \GamE_t, \phi_t]
  =
  \calF_{\mathrm{surf}}[S_t, \GamE_t]
  + \calF_{\mathrm{elec}}[\phi_t; S_t],
\end{equation}
which consists of the surface energy $\calF_{\mathrm{surf}}$ and the electrostatic energy $\calF_{\mathrm{elec}}$, where $S_t$, $\GamE_t$, and $\phi_t$ represent the interface, surfactant concentration, and electrostatic potential at time $t$, respectively. For notational simplicity, we omit the subscripts $t$ in $\GamE_t$ and $\phi_t$ throughout the rest of the paper unless explicit reference to time dependence is needed.

The surface energy of the droplet accounts for the work required to create and deform the interface and is given by
\begin{equation}
  \label{eq:F_surf}
  \calF_{\mathrm{surf}}[S_t, \GamE]
  = \int_{S_t} e(\GamE)\,\ud\mathcal{H}^2 + (\gamma_{SL}-\gamma_{SG})\int_{S_w}~\dd^2{x},
\end{equation}
where the first term represents the surface energy associated with the liquid-air interface, while the second term represents the surface energy contribution from the solid substrate.

The electrostatic energy stored in the system is given by
\begin{equation}
  \label{eq:F_el}
  \calF_{\mathrm{elec}}[S_t,\phi]
  = \int_{\Omega_+(t)\cup\Omega_-(t)} \frac{\varepsilon(\bfx)}{2}\,|\grad\phi|^2\,\dd^3 x
  +  \int_{S_t} q(\Gamma)\, \phi \, \ud\mathcal{H}^2,
\end{equation}
where $\varepsilon(\bfx) = \varepsilon_\pm$ in $\Omega_\pm$. The first term represents the electrostatic energy stored in the bulk, and the second term accounts for the electrostatic energy associated with the surface charge density on the interface $S_t$.

Combining \eqref{eq:F_surf} and \eqref{eq:F_el}, the total free energy is given by
\begin{equation}
  \label{eq:F_combined}
  \calF[S_t, \GamE, \phi]
  = \int_{S_t} e(\GamE) \,\ud\mathcal{H}^2
  +   (\gamma_{SL}-\gamma_{SG})\int_{S_w}~\dd^2{x} 
  + \int_{\Omega_+(t)\cup\Omega_-(t)} \frac{\varepsilon}{2}|\grad\phi|^2\,\dd^3 x +  \int_{S_t} q(\Gamma)\, \phi \, \ud\mathcal{H}^2.
\end{equation}

\begin{remark}[Stokes regime]
Unlike the electrowetting model considered by \cite{eck2009phase}, the present work focuses on the Stokes regime with zero Reynolds number, where inertial effects are neglected. Consequently, the kinetic energy $\frac{1}{2}\int_\Omega \rho|\bfu|^2\dd x$ does not contribute to the free energy functional $\calF$. Instead, viscous effects are incorporated through the
dissipation functional; see Section~\ref{sec:dissipation}.
\end{remark}

\subsection{Rate of change of the surface energy}
\label{sec:variations_surface}

We now compute the rate of change of the surface free energy $\calF_{\mathrm{surf}}$ defined in \eqref{eq:F_surf} over time.
Differentiating the first term in $\calF_{\mathrm{surf}}$ yields
\begin{align}
\frac{\ud}{\ud t} \int_{S_t} e(\GamE)\, \ud\mathcal{H}^2
&= \int_{S_t} \left( e'(\Gamma)(\partial_t \Gamma +\bfu\cdot \nabla_S \Gamma )
+ e(\Gamma)\operatorname{div}_S \bfu\right) \, \ud\mathcal{H}^2,
\end{align}
where we have used the geometric identity
\begin{equation}
\frac{d}{dt}|A_t| = \int_{A_t} \operatorname{div}_S \bfu \, \ud\mathcal{H}^2
\label{eq:geometric_identity_Jacobi_0}
\end{equation}
for any surface patch $A_t$.
Using the transport equation \eqref{eq:transport_surfactant},
we obtain
\begin{align}
\frac{\ud}{\ud t} \int_{S_t} e(\GamE)\, \ud\mathcal{H}^2
&= \int_{S_t} \big(e(\Gamma)-\Gamma e'(\Gamma) \big) \operatorname{div}_S \bfu \, \ud\mathcal{H}^2- \int_{S_t} \mu(\Gamma) \operatorname{div}_S \bfJ \, \ud\mathcal{H}^2 \nonumber \\
&= \int_{S_t} \gamma(\Gamma) \operatorname{div}_S \bfu \, \ud\mathcal{H}^2
+ \int_{S_t} \bfJ \cdot \nabla_S\mu\, \ud\mathcal{H}^2,
\label{eq:rate_surface_LG}
\end{align}
where we have used the definition of the chemical potential $\mu = e'(\Gamma)$ together with the no-flux boundary condition \eqref{eq:noflux_bc_Gamma}.
Combining \eqref{eq:rate_surface_LG} with
$
\frac{\ud}{\ud t} \int_{S_w}~\dd^2{x} = \int_{C_t}v_{\cl}\, \ud\mathcal{H}^1,
$
the rate of change of the surface free energy \eqref{eq:F_surf} becomes
\begin{equation}
\frac{\ud}{\ud t} \calF_{\mathrm{surf}}=\int_{S_t} \gamma(\Gamma) \operatorname{div}_S \bfu \, \ud\mathcal{H}^2+ (\gamma_{SL}-\gamma_{SG})\int_{C_t}v_{\cl}\, \ud\mathcal{H}^1 + \int_{S_t} \bfJ \cdot \nabla_S\mu \, \ud\mathcal{H}^2.
\label{eq:rate_surf_energy}
\end{equation}

To obtain a geometric interpretation of this rate of change, we use the identity \eqref{eq:identity_div_u} together with the surface divergence theorem to obtain
\begin{equation}
\int_{S_t} \gamma(\Gamma) \operatorname{div}_S \bfu \, \ud\mathcal{H}^2
 = \int_{S_t} (\gamma \kappa V - \bfv_{\tng} \cdot \nabla_S \gamma) \, \ud\mathcal{H}^2 + 
 \int_{C_t} \gamma \mathbf{b}\cdot \bfu\,\ud\mathcal{H}^1.
 \label{eq:div_surface_1}
\end{equation}
From $\mathbf{b}\perp \bftau_{\cl}$ and the definition of the contact angle in \eqref{eq:contact_angle_def}, we have \begin{equation}\label{eq:b_dot_X_1}
  \mathbf{b}\cdot \bfu = v_{\cl}\, \mathbf{b}\cdot \bfn_{\cl} = v_{\cl}\cos\theta_{\cl}.
\end{equation}
Thus, only the $\bfn_{\cl}$-component contributes to the boundary term in \eqref{eq:div_surface_1}. Substituting \eqref{eq:b_dot_X_1} into \eqref{eq:div_surface_1} gives
\begin{equation}
\int_{S_t} \gamma(\Gamma) \operatorname{div}_S \bfu \, \ud\mathcal{H}^2
 = \int_{S_t} (\gamma \kappa V - \bfv_{\tng}\cdot \nabla_S \gamma) \, \ud\mathcal{H}^2 + 
 \int_{C_t} \gamma(\Gamma) v_{\cl}\cos\theta_{\cl}\,\ud\mathcal{H}^1.
 \label{eq:div_surface_2}
\end{equation}
Therefore, combining \eqref{eq:rate_surf_energy} and \eqref{eq:div_surface_2},
the rate of change of the surface free energy is
\begin{equation}
\frac{\ud}{\ud t} \calF_{\mathrm{surf}}
=\int_{S_t} (\gamma \kappa V -\bfv_{\tng}\cdot \nabla_S \gamma) \, \ud\mathcal{H}^2 + 
 \int_{C_t} \left(
 \gamma(\Gamma) \cos\theta_{\cl}+\gamma_{SL}-\gamma_{SG}\right)v_{\cl}\,\ud\mathcal{H}^1 
 + \int_{S_t} \bfJ \cdot \nabla_S\mu\, \ud\mathcal{H}^2.
\label{eq:Fdot_surf}
\end{equation}

\subsection{Rate of change of the electrostatic energy}
\label{sec:variations_electric}
Next, we compute the rate of change of the electrostatic energy $\calF_{\mathrm{elec}}$ defined in \eqref{eq:F_el}.
To facilitate the derivation, we decompose $\calF_{\mathrm{elec}}$ into two contributions, the bulk electrostatic energy and the surface electrostatic energy,
\begin{equation}
\calF_{\mathrm{elec}} = \calF_{\mathrm{elec-bulk}} + \calF_{\mathrm{elec-surf}},
\end{equation}
where
\begin{equation}
\calF_{\mathrm{elec-bulk}}
=
\sum_{\pm}\int_{\Omega_\pm(t)} \frac{\varepsilon_\pm}{2}|\nabla\phi_\pm|^2\,\dd^3x,
\quad
\calF_{\mathrm{elec-surf}}=\int_{S_t} q(\Gamma)\,\phi\,\dSsurf.
\label{eq:energy_elec-bulk-surf}
\end{equation}
We derive the rates of change of $\calF_{\mathrm{elec-bulk}}$ and $\calF_{\mathrm{elec-surf}}$ below.

For the bulk electrostatic energy 
$\calF_{\mathrm{elec-bulk}}$ defined in \eqref{eq:energy_elec-bulk-surf},
applying the Reynolds transport theorem to the two moving subdomains yields
\begin{equation}
\frac{\dd}{\dd t}\calF_{\mathrm{elec-bulk}}
=
\sum_{\pm}\int_{\Omega_\pm(t)} \varepsilon_\pm \,\nabla\phi_\pm\cdot\partial_t \nabla\phi_{\pm}\,\dd^3x
-\int_{S_t}\Bigl[\frac{\varepsilon}{2}|\nabla\phi|^2\Bigr]V\,\dSsurf.
\label{eq:bulk-1}
\end{equation}
Since $\nabla\cdot(\varepsilon\nabla\phi)=0$ in each phase, integration by parts gives
\begin{equation}
\frac{\dd}{\dd t}\calF_{\mathrm{elec-bulk}}
=
\int_{S_t}\left(-\varepsilon_+\partial_n\phi_+\,\partial_t\phi_{+}+\varepsilon_-\partial_n\phi_-\,\partial_t\phi_{-}
-\Bigl[\frac{\varepsilon}{2}|\nabla\phi|^2\Bigr]V\right)\dSsurf.
\label{eq:bulk-2}
\end{equation}

We introduce the following claim for the integrand in \eqref{eq:bulk-2}:
\begin{claim}
\label{claim:elec_bulk}
The integrand of \eqref{eq:bulk-2} satisfies
\begin{equation}
 -\varepsilon_+\partial_n\phi_+\,\partial_t\phi_{+}+\varepsilon_-\partial_n\phi_-\,\partial_t\phi_{-}
-\Bigl[\frac{\varepsilon}{2}|\nabla\phi|^2\Bigr]V =    [\bfTM]\bfn\cdot \bfu-q\,\DtS\phi,
\label{eq:claim}
\end{equation}
where $\bfTM$ denotes the Maxwell stress,
\begin{equation}
\bfTM :=\varepsilon\bfE \otimes \bfE  -
\frac12 \varepsilon|\bfE|^2 \bfI,
\label{eq:maxwell_stress}
\end{equation}
where $\bfE = -\nabla\phi$ represents the electric field and $\bfI$ is the identity tensor. The quantity $\DtS\phi$ represents the \emph{surface material derivative} of the electrostatic potential $\phi$ along the moving interface $S_t$, 
\begin{equation}
\DtS\phi:=\partial_t\phi+\bfu\cdot\nabla\phi,
\label{eq:surface_material_derivative}
\end{equation}
where $\nabla$ represents the full gradient in $\RR^3$. 
\end{claim}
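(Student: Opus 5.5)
We prove the identity pointwise on $S_t$ by rewriting the left-hand side in terms of the surface material derivative $\DtS\phi$. The first step is to trade the one-sided time derivatives for material derivatives. On each side, $\partial_t\phi_\pm = \DtS\phi - \bfu\cdot\nabla\phi_\pm$. Here $\DtS\phi$ is the same from both sides: $[\phi]=0$ holds for all $t$ along the moving interface, so differentiating along the interface velocity $\bfu$ (continuous by \eqref{eq:continuity_phi_u}) gives $[\partial_t\phi+\bfu\cdot\nabla\phi]=0$. Substituting into the first two terms gives
\[
-\varepsilon_+\partial_n\phi_+\,\partial_t\phi_{+}+\varepsilon_-\partial_n\phi_-\,\partial_t\phi_{-}
= -[\varepsilon\partial_n\phi]\,\DtS\phi + \bigl[\varepsilon\,\partial_n\phi\,(\bfu\cdot\nabla\phi)\bigr].
\]
By the Gauss jump condition \eqref{eq:jump_q}, the first term on the right is exactly $-q\,\DtS\phi$.

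The second step is to identify the remaining terms with the Maxwell stress. From \eqref{eq:maxwell_stress} with $\bfE=-\nabla\phi$,
\[
\bfTM\bfn\cdot\bfu=\varepsilon(\nabla\phi\cdot\bfn)(\nabla\phi\cdot\bfu)-\tfrac12\varepsilon|\nabla\phi|^2\,\bfn\cdot\bfu .
\]
Taking jumps and using $\bfu$ continuous and $V=\bfu\cdot\bfn$, we get
\[
[\bfTM]\bfn\cdot\bfu=\bigl[\varepsilon\,\partial_n\phi\,(\bfu\cdot\nabla\phi)\bigr]-\Bigl[\tfrac{\varepsilon}{2}|\nabla\phi|^2\Bigr]V .
\]
This matches the leftover bracket from the first step together with the $V$-term already on the left-hand side of \eqref{eq:claim}. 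Adding the two pieces yields \eqref{eq:claim}.

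The main obstacle is justifying that $\DtS\phi$ is single-valued across $S_t$. Any differentiation of $\phi$ that crosses the interface sees a jump in the normal derivative, so the argument must differentiate only along the interface. I would make this rigorous in one of two equivalent ways:
\begin{itemize}
\item Parametrize $S_t$ by the flow map $X(\cdot,t)$ and differentiate $\phi_+(X,t)=\phi_-(X,t)$ in $t$, noting that $\partial_t X=\bfu$.
\item Decompose $\bfu=V\bfn+\vtan$ and use the facts that tangential derivatives of $\phi$ are continuous and that the normal-motion derivative $\partial_t\phi+V\partial_n\phi$ is continuous because $[\phi]\equiv0$ on the moving surface.
\end{itemize}
Either route gives $[\DtS\phi]=0$, and the rest is the algebra above.
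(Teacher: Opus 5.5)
Your proof is correct and follows the paper's argument. Both substitute $\partial_t\phi_\pm=\DtS\phi-\bfu\cdot\nabla\phi_\pm$, use $[\varepsilon\partial_n\phi]=q$ to extract $-q\,\DtS\phi$, and identify what remains with the jump of the Maxwell traction. The differences are minor:
\begin{itemize}
\item The paper first splits $\bfu=V\bfn+\vtan$ and matches the normal and tangential components of $[\bfTM]\bfn$ separately. It reuses those component identities later. You instead match $[\varepsilon\,\partial_n\phi\,(\bfu\cdot\nabla\phi)]$ in a single step.
\item You make explicit the continuity $[\DtS\phi]=0$ across $S_t$, which the paper uses without stating it.
\end{itemize}
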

A proof of Claim~\ref{claim:elec_bulk} is provided in Appendix~\ref{sec:proof_claim2}.

Combining Claim \ref{claim:elec_bulk} with \eqref{eq:bulk-2} immediately gives the rate of change of the bulk electrostatic energy
\begin{equation}
\frac{\dd}{\dd t}\calF_{\mathrm{elec-bulk}}
=
\int_{S_t}
\left(
[\bfTM]\bfn\cdot \bfu-q\,\DtS\phi
\right)
\dSsurf.
\label{eq:boxed-bulk}
\end{equation}
\begin{remark} 
Consider the moving surface $S_t$ with velocity field $\bfu$.
A material point $\bfx(t)\in S_t$ moves along with the surface according to the velocity, $\dot \bfx(t)=\bfu(\bfx(t),t)$.
The electrostatic potential $\phi$ evaluated at this material point, $\phi(\bfx(t),t)$, therefore satisfies
\[
\frac{\mathrm d}{\mathrm dt}\phi(\bfx(t),t)
=
\partial_t\phi(\bfx(t),t)+\dot \bfx(t)\cdot\nabla\phi(\bfx(t),t) = 
\partial_t\phi(\bfx(t),t)+\bfu\cdot\nabla\phi(\bfx(t),t).
\]
This yields the definition of the surface material derivative in \eqref{eq:surface_material_derivative}.
\end{remark}

For the surface electrostatic energy
$\calF_{\mathrm{elec-surf}}$ defined in \eqref{eq:energy_elec-bulk-surf},
the surface transport equation \eqref{eq:general_derivative_patch} yields its rate of change
\begin{equation}
\frac{\dd}{\dd t}\calF_{\mathrm{elec-surf}}
=
\int_{S_t}
\left(
\DtS(q(\Gamma)\phi)+q(\Gamma)\phi\,\divS \bfu
\right)
\dSsurf,
\label{eq:surf-1}
\end{equation}
where we have used the definition of the surface material derivative \eqref{eq:surface_material_derivative}.

Using the transport equation for surfactant concentration \eqref{eq:transport_surfactant} together with the expansion of the material derivative
$
\DtS( q(\Gamma)\phi)=q(\Gamma)\DtS\phi+\phi q'(\Gamma)\DtS\Gamma,
$
we obtain
\begin{align}
\frac{\dd}{\dd t}\calF_{\mathrm{elec-surf}}
&=
\int_{S_t}
\left(
q(\Gamma)\DtS\phi
+\phi q'(\Gamma)(-\Gamma\,\divS \bfu - \divS \bfJ)
+\phi q(\Gamma)\,\divS \bfu
\right)
\dSsurf \notag\\
&=
\int_{S_t}
\left(
q(\Gamma)\DtS\phi
+\bigl(q(\Gamma)-\Gamma q'(\Gamma)\bigr)\phi\,\divS \bfu - \phi q'(\Gamma)\divS\bfJ
\right)
\dSsurf.
\label{eq:surf-final}
\end{align}
Since we assume the linear relation $q(\Gamma) = A\Gamma$ (see \eqref{eq:q_linear_Gamma}), \eqref{eq:surf-final} reduces to
\begin{equation}
\frac{\dd}{\dd t}\calF_{\mathrm{elec-surf}}
=
\int_{S_t}
\left(q(\Gamma)\DtS\phi - A \phi \divS\bfJ\right)
\dSsurf.
\label{eq:surf_simplified}
\end{equation}
Substituting \eqref{eq:boxed-bulk} and \eqref{eq:surf_simplified} into the time derivative of the electrostatic energy \eqref{eq:F_el}, and using the no-flux boundary condition \eqref{eq:noflux_bc_Gamma} together with the surface divergence theorem, we obtain
\begin{equation}
\label{eq:dot_Felec}
\frac{\dd}{\dd t}\calF_{\mathrm{elec}} = 
\int_{S_t}
\left([\bfTM]\bfn\cdot \bfu + A \bfJ\cdot \nabla_S\phi \right)
\dSsurf.
\end{equation}

\subsection{The rate of change of the free energy and unbalanced forces identified}
\label{sec:unbalanced_forces}

Substituting \eqref{eq:Fdot_surf} and \eqref{eq:dot_Felec} into the total free energy \eqref{eq:F_total},
we obtain the rate of change of the total free energy $\calF$,
\begin{align}
\label{eq:Fdot}
 \frac{\dd \calF}{\dd t}&=
\int_{S_t} (\gamma \kappa V -\bfv_{tng}\cdot \nabla_S \gamma) \, \ud\mathcal{H}^2 + 
 \int_{C_t} \left(
 \gamma(\Gamma) \cos\theta_{\cl}+\gamma_{SL}-\gamma_{SG}\right)v_{\cl}\,\ud\mathcal{H}^1 \nonumber \\
& + \int_{S_t}
\left([\bfTM]\bfn\cdot \bfu + \bfJ\cdot \nabla_S(\mu + A\phi)\right)
\dSsurf.
\end{align}
To further simplify this expression, we define the total surface force
\begin{equation}
\bff_{\mathrm{surf}} = \kappa \gamma\bfn
      - \grads \gamma,
\label{eq:total_surface_stress}
\end{equation}
where the first term represents the normal capillary force due to surface tension, and the second term corresponds to the tangential Marangoni stress induced by surfactant concentration gradients.
Substituting \eqref{eq:total_surface_stress} into \eqref{eq:Fdot} yields
\begin{equation}
\label{eq:Fdot2}
 \frac{\dd \calF}{\dd t}=\mathcal{I}_{\mathrm{surf}} + \mathcal{I}_{\mathrm{\cl}}+\mathcal{I}_{\mathrm{diff}},
\end{equation}
where
\begin{equation}
    \mathcal{I}_{\mathrm{surf}} = \int_{S_t} (\bff_{\mathrm{surf}} + [\bfTM]\bfn) \cdot \bfu \ud\mathcal{H}^2, 
\label{eq:I_surf}
\end{equation}
\begin{equation}
        \mathcal{I}_{\mathrm{\cl}} =  \int_{C_t} \left(
 \gamma(\Gamma) \cos\theta_{\cl}+\gamma_{SL}-\gamma_{SG}\right)v_{\cl}\,\ud\mathcal{H}^1,
\label{eq:I_cl}
\end{equation}
and
\begin{equation}
    \mathcal{I}_{\mathrm{diff}} = \int_{S_t} \bfJ\cdot \nabla_S(\mu + A\phi) \, \ud\mathcal{H}^2.
\label{eq:I_diff}
\end{equation}

Equation \eqref{eq:Fdot2} shows that the total rate of change of the free energy consists of three contributions:
(1) The first integral $\mathcal{I}_{\mathrm{surf}}$ describes the contributions of the unbalanced normal and tangential forces at the interface arising from the surface and Maxwell stresses. This contribution is discussed in more detail below in this subsection.
(2) The second integral $\mathcal{I}_{\mathrm{\cl}}$ represents the contribution of the contact-line dynamics through the unbalanced Young's force $\gamma(\Gamma) \cos\theta_{\cl}+\gamma_{SL}-\gamma_{SG}$ multiplied by the contact-line velocity $v_{\cl}$. 
(3) The third integral $\mathcal{I}_{\mathrm{diff}}$ represents the contribution of surface diffusion through the unbalanced diffusion force in Fick's law multiplied by the diffusive flux $\bfJ$.
These unbalanced forces identified in \eqref{eq:Fdot2} will be determined through Onsager's variational principle by pairing them with appropriate dissipation functionals, as discussed in subsections \ref{sec:dissipation} and \ref{sec:rayleighian}.

Next, we discuss the unbalanced forces identified in $\mathcal{I}_{\mathrm{surf}}$ defined in \eqref{eq:I_surf}.
Using the definition of surface tension together with its relation to the chemical potential $\mu$ in \eqref{eq:surface_energy_chemical}, we express the surface tension gradient along the interface as
\begin{equation}
\nabla_S \gamma = -\Gamma \nabla_S \mu.
\end{equation}
Consequently, the total surface force $\mathbf{f}_{\mathrm{surf}}$ defined in \eqref{eq:total_surface_stress} can be written as
$
\mathbf{f}_{\mathrm{surf}} = \kappa \gamma \mathbf{n} + \Gamma \nabla_S \mu.
$
Therefore, the contribution of the surface force to the rate of change of the free energy is 
\begin{equation}
\int_{S_t} (\gamma \kappa V +\bfv_{tng}\cdot \Gamma\nabla_S \mu) \, \ud\mathcal{H}^2,
\end{equation}
where $V$ denotes the normal velocity of the interface and $\mathbf{v}_{\mathrm{tng}}$ is the tangential velocity (see \eqref{eq:velocity_decomp}). Physically, the first term represents the work done by capillary forces in the normal direction, and the second term accounts for the work associated with Marangoni-driven tangential motion induced by gradients in chemical potential.

Similarly, using \eqref{eq:maxwell-traction-identity} and \eqref{eq:maxwell-traction-identity2}, the contribution from the Maxwell traction can be decomposed as
\begin{equation}
[\bfTM\bfn]\cdot \bfu = V \, [\mathbf{T}_M]\mathbf{n} \cdot \mathbf{n} 
+ \mathbf{v}_{\mathrm{tng}} \cdot [\mathbf{T}_M]\mathbf{n}
=
V \left[\varepsilon (\partial_n \phi)^2 - \frac{\varepsilon}{2} |\nabla \phi|^2 \right]
+ \mathbf{v}_{\mathrm{tng}} \cdot \Gamma \nabla_S (A \phi),
\end{equation}
where the first term represents the normal electrostatic stress, while the second term describes the tangential electrostatic forcing along the interface.

Combining all contributions, the integral $\mathcal{I}_{\mathrm{surf}}$ in \eqref{eq:I_surf} becomes
\begin{equation}
\mathcal{I}_{\mathrm{surf}}=
\int_{S_t}
\left(V \left(\kappa \gamma+
\left[\varepsilon (\partial_n \phi)^2 - \frac{\varepsilon}{2} |\nabla \phi|^2 \right]
\right)
+ \mathbf{v}_{\mathrm{tng}} \cdot \Gamma \nabla_S \left( A \phi + \mu \right) \right)\ud \mathcal{H}^2,
\label{eq:rate_energy_change_fsurf_Tm}
\end{equation}
where the two terms in the integrand correspond to the combined contributions of the unbalanced normal and tangential stresses, respectively.
Physically, the normal component governs interface deformation through the interaction between capillary and electrostatic pressures, while the tangential component drives interfacial flows through gradients in chemical potential and electric potential.

\section{Dissipation functional and Onsager's principle}
\label{sec:dissipation_Rayleighian}
In this section, we introduce the dissipation functional $\calD$ within Onsager's variational framework to determine the unbalanced forces identified through the rate of change of the free energy \eqref{eq:Fdot}. The construction of the dissipation functional is presented in Subsection~\ref{sec:dissipation}. The Rayleighian and its minimization, together with the resulting coupled system, are then presented in Subsections \ref{sec:rayleighian} and \ref{sec:system}, respectively.

\subsection{Dissipation functional}
\label{sec:dissipation}
We introduce three dissipation terms corresponding to the three groups of unbalanced forces identified through the rate of change of the free energy \eqref{eq:Fdot2}. Specifically,
\begin{itemize}
\item To account for the unbalanced normal and tangential stresses at the interface ($\mathcal{I}_{\mathrm{surf}}$ in \eqref{eq:I_surf}),  we introduce the viscous dissipation $\calD_{\mathrm{visc}}$ for Newtonian fluids,
\begin{equation}
  \label{eq:D_visc}
  \calD_{\mathrm{visc}}[\bfu]
  = \int_{\Omega_{+}\cup\Omega_{-}}\mu_{\pm}\,|\bfD(\bfu)|^2\,\dd^3 x,
\end{equation}
where $|\bfD|^2 = \bfD:\bfD = \sum_{i,j}D_{ij}^2$ denotes the squared
Frobenius norm of the rate-of-strain tensor, and the velocity $\bfu$ satisfies the no-slip and no-penetration boundary condition
\begin{equation}
    \bfu = 0 \quad \mbox{ on } S_w.
\label{eq:no-slip}
\end{equation}
We note that Navier slip boundary conditions can also be used to alleviate the contact-line singularity \citep{qian2003molecular,ren2011contact}.

\item To pair with the unbalanced Young's force ($\mathcal{I}_{\mathrm{\cl}}$ in \eqref{eq:I_cl}), we introduce the dissipation associated with contact-line motion
\begin{equation}
    \label{eq:dissipation_cl}
\calD_{\cl}[v_{\cl}] = \frac{\zeta}{2}\int_{C_t}|v_{\cl}|^2~\dd \ell,
\end{equation}
where $\zeta$ is the contact-line friction coefficient.

\item To account for the unbalanced surface diffusion force ($\mathcal{I}_{\mathrm{diff}}$ in \eqref{eq:I_diff}), we introduce the surface diffusion dissipation $\mathcal{D}_{\mathrm{diff}}$ arising from tangential transport of surfactant along the interface,
\begin{equation}
\calD_{\mathrm{diff}}[\bfJ] = \int_{S_t}\frac{|\bfJ|^2}{2\mathcal{M}(\Gamma)}\ud\mathcal{H}^2,
\label{eq:dissi_surface}
\end{equation}
where $\mathcal{M}$ is the mobility coefficient that depends on the surfactant concentration $\Gamma$.
\end{itemize}

Overall, the total dissipation functional is defined as the sum of the three contributions,
\begin{equation}
  \label{eq:D_total}
  \calD
  = \calD_{\mathrm{visc}} 
     + \calD_{\cl} + \calD_{\mathrm{diff}}.
\end{equation}

\subsection{The Rayleighian and its minimization}
\label{sec:rayleighian}

At each fixed time $t$, and for a given state
$(S_t, \GamE_t, \phi_t)$, we define the \emph{Rayleighian} based on the rate of change of the free energy \eqref{eq:Fdot2} and the total dissipation functional \eqref{eq:D_total} as
\begin{align}
  \label{eq:R_def}
  \calR
  &= \frac{\dd}{\dd t}\calF
    + \calD
    - \int_{\Omega_+\cup \Omega_-} p\,\divv\bfu\,\dd^3 x \notag\\
  &=
   \int_{S_t} (\bff_{\mathrm{surf}} + [\bfTM]\bfn) \cdot \bfu \ud\mathcal{H}^2+
 \int_{C_t} \left(
 \gamma(\Gamma) \cos\theta_{\cl}+\gamma_{SL}-\gamma_{SG}\right)v_{\cl}\,\ud\mathcal{H}^1 \notag\\
& + \int_{S_t} \bfJ\cdot \nabla_S(\mu + A\phi) \, \ud\mathcal{H}^2
+\int_{S_t}\frac{|\bfJ|^2}{2\mathcal{M}(\Gamma)}\ud\mathcal{H}^2 \notag\\
& + \int_{\Omega_{+}\cup\Omega_{-}}\mu_{\pm}\,|\bfD(\bfu)|^2\,\dd^3 x + \frac{\zeta}{2}\int_{C_t}|v_{\cl}|^2~\ud\mathcal{H}^1 - \int_{\Omega_+\cup \Omega_-} p\,\divv\bfu\,\dd^3 x,
\end{align}
where the last term enforces the incompressibility constraint $\divv\bfu = 0$ via the Lagrange multiplier $p$.

The governing equations are obtained by minimizing the Rayleighian,
\begin{equation}
  \label{eq:min_R}
  (\bfJ, \bfu, p, v_{\cl})
  = \operatorname*{arg\,min}_{\widetilde\bfJ,\widetilde\bfu, \widetilde p,  \widetilde v_{\cl}}
    \calR[\widetilde\bfJ, \widetilde\bfu, \widetilde p, \widetilde v_{\cl}].
\end{equation}
That is, the dynamics of $(S_t, \GamE_t, \phi_t)$ for $t\in [0,T]$ can be viewed as a trajectory on a Hilbert manifold. At a fixed time $t$, for a given point $p_t =(S_t, \GamE_t, \phi_t)$, the tangent direction $v_t$ in the tangent space $T_{p_t}$ is parameterized by  $(\bfJ, \bfu, p, v_{cl})$, and the evolution direction is determined by minimization of the Rayleighian \citep{GaoLiu2021IFB}.

Taking the first variation of $\calR$ with respect to $\bfJ$ and setting it to zero yields the diffusive surfactant flux
\begin{equation}
    \bfJ = -\mathcal{M}(\Gamma)\grads(\mu + A\phi).
\label{eq:tangential_flux}
\end{equation}
Substituting \eqref{eq:tangential_flux} into \eqref{eq:transport_surfactant} leads to the complete transport equation
\begin{equation}
\partial_t \Gamma + \bfu \cdot \nabla_S \Gamma + \Gamma \operatorname{div}_S \bfu = \operatorname{div}_S \left(\mathcal{M}(\Gamma)\grads(\mu + A\phi)\right).
\label{eq:transport_surfactant_complete}
\end{equation}
Similarly, the equivalent geometric form of the transport equation \eqref{eq:surfactant_transport_main} becomes
\begin{equation}
\partial_t \Gamma + \operatorname{div}_S(\Gamma \bfv_{\mathrm{tng}}) +  \kappa \Gamma V
= \operatorname{div}_S \left(\mathcal{M}(\Gamma)\grads(\mu + A\phi)\right).
\label{eq:surfactant_transport_main_complete}
\end{equation}

Taking the first variation of $\calR$ with respect to $p$ and setting it to zero gives the incompressibility condition
\begin{equation}
    \nabla\cdot\bfu = 0 \quad \text{in } \Omega_\pm.
\label{eq:incomp_condition}
\end{equation}
Taking the first variation of $\calR$ with respect to $\bfu$ and setting it to zero yields the Stokes equations together with the interfacial stress balance incorporating capillary and Maxwell stresses
\begin{align}
  \label{eq:Stokes_bulk}
  -\grad p + \divv(2\mu_{\pm}\bfD(\bfu)) &= \mathbf{0}
  &&\text{in } \Omega_\pm,\\
  \label{eq:Stokes_interface}
  \bigl[-p\bfI + 2\mu_{\pm}\bfD(\bfu) - \bfTM\bigr]\bfn
  &= \gamma\kappa\bfn - \grads \gamma
  &&\text{on } S_t.
\end{align}
The no-slip and no-penetration boundary condition on the velocity $\bfu$ is enforced through \eqref{eq:no-slip}.

Finally, minimizing $\calR$ with respect to the contact-line velocity $v_{\cl}$ gives the governing equation for the contact-line dynamics,
\begin{equation}
\zeta v_{\cl} = -\gamma(\Gamma)\cos\theta_{\cl} -(\gamma_{SL}-\gamma_{SG}).
\label{eq:ctl_condition}
\end{equation}

\subsection{The coupled system}
\label{sec:system}

Combining equations \eqref{eq:laplace} -- \eqref{eq:continuity_phi_u}, \eqref{eq:noflux_bc_Gamma}, \eqref{eq:transport_surfactant_complete}, and 
\eqref{eq:incomp_condition} -- \eqref{eq:ctl_condition}, we obtain the complete coupled system consisting of bulk equations in each bulk phase, 
\begin{subequations}
\label{eq:system}
\begin{alignat}{2}
  \divv\bfu &= 0 
  &&\quad\text{in }\Omega_\pm(t),
  \label{eq:sys_incomp}\\
  -\grad p + \mu_{\pm}\Delta \bfu &= \mathbf{0}
  &&\quad\text{in }\Omega_\pm(t),
  \label{eq:sys_Stokes}\\
  \divv(\varepsilon\grad\phi) &= 0
  &&\quad\text{in }\Omega_\pm(t),
  \label{eq:sys_Poisson}
\end{alignat}
the surfactant transport equations along the moving interface $S_t$,
\begin{alignat}{2}
\partial_t \Gamma + \operatorname{div}_S(\Gamma\bfu) &= - \operatorname{div}_S \bfJ, \qquad \bfJ = -\mathcal{M}(\Gamma)\grads(\mu + A\phi) &&\quad\text{on }S_t,
  \label{eq:sys_transport}
\end{alignat}
the interface conditions at the liquid-air interface $S_t$,
\begin{alignat}{2}
  \bigl[-p\bfI+2\mu_{\pm}\bfD(\bfu)-\bfTM\bigr]\bfn
  &= \gamma\kappa\bfn-\grads \gamma
  &&\quad\text{on }S_t,
  \label{eq:sys_stress}\\
  [\phi] = 0,\quad [\varepsilon\partial_n\phi] &= q(\Gamma), 
  &&\quad\text{on }S_t,
  \label{eq:sys_phi_jump}\\
  V &= \bfu\cdot\bfn
  &&\quad\text{on }S_t,
  \label{eq:Vn_interface_bc}
\end{alignat}
the interface condition on the solid substrate,
\begin{alignat}{2}
  \bfu &= 0 &&\quad \text{on } S_w,
  \label{eq:no_slip_bc}
  \\
  \phi & = \phi_0(x,y) &&\quad \text{at } z = 0,
  \label{eq:phi0_interface}
\end{alignat}
and the contact-line conditions,
\begin{alignat}{2}
\zeta v_{\cl} = -\gamma(\Gamma)\cos\theta_{\cl} &-(\gamma_{SL}-\gamma_{SG}) && \quad \text{on } C_t,
\label{eq:contactLine_bc}\\
\bfJ\cdot \bfb = & 0 &&\quad \mbox{on } C_t,
\label{eq:noflux_bc_Gamma_sys}
\end{alignat}
where the Maxwell stress $\bfTM$ is defined in \eqref{eq:maxwell_stress}, 
the surface charge density $q(\GamE)$ is defined in \eqref{eq:q_linear_Gamma}, and the surface tension $\gamma(\GamE)$ and chemical potential $\mu(\GamE)$ are defined in \eqref{eq:surface_energy_chemical}. The definition of the surface energy density $e(\GamE)$ is given by \eqref{eq:surface_energy_density}.
Appropriate far-field boundary conditions for $\bfu$ and $\phi$ on $\partial\Omega$ also need to be imposed to complete the system.
\end{subequations}

A typical choice of the mobility function $\calM$ is
\begin{equation}
\calM(\GamE) = \frac{D}{e''(\Gamma)} =\frac{D\GamE(\Gamma_s-\GamE)}{kT\Gamma_s},
\label{eq:choice_M}
\end{equation}
where $\Gamma_s$ is the saturated surfactant concentration, and $D$ is a diffusion coefficient. Using \eqref{eq:choice_M} we obtain the diffusive flux in \eqref{eq:sys_transport} as 
\begin{equation}
\bfJ = -D\nabla_S \Gamma - A\calM(\Gamma)\nabla_S \phi.
\label{eq:J_mobility_version}
\end{equation}

The system \eqref{sec:system} describes the coupled dynamics of the velocity field $\bfu$, pressure $p$, the electrostatic potential $\phi$, surfactant concentration $\GamE$, and the moving interface $S_t$. More specifically, 
the Stokes equations \eqref{eq:sys_incomp}--\eqref{eq:sys_Stokes}, together with the traction boundary condition \eqref{eq:sys_stress} and the no-slip and no-penetration boundary conditions \eqref{eq:no_slip_bc},
govern $(\bfu, p)$ in each bulk phase. The Laplace equation \eqref{eq:sys_Poisson}, together with the interfacial jump conditions \eqref{eq:sys_phi_jump} and the Dirichlet boundary condition \eqref{eq:phi0_interface}, governs the dynamics of the electrostatic potential $\phi$. The interface $S_t$ evolves according to the surface normal velocity $V$ described by \eqref{eq:Vn_interface_bc}, and
the contact-line velocity condition \eqref{eq:contactLine_bc} determines the contact-line dynamics. Finally, the surfactant transport equation \eqref{eq:sys_transport}
governs the evolution of $\GamE$ on the free interface $S_t$.
We also note that the geometric setup of the problem implicitly assumes that the moving interface $S_t$ remains attached to the solid substrate $z = 0$, that is, $S_t \cap \{z = 0\} = C_t$.

\section{A reduced system with Rayleigh dissipation}
\label{sec:reduced_Rayleigh}
While the system \eqref{sec:system} describes the comprehensive interactions among bulk fluid dynamics, interfacial evolution, surfactant transport, and electrostatic effects, numerically solving the full Stokes system coupled with moving interfaces and electrostatic equations can be computationally expensive, even with tailored numerical approaches such as level-set methods, volume-of-fluid methods, and immersed boundary method \citep{osher1988fronts,karimi2023multiphase,peskin2002immersed}. For many applications, it is desirable to develop reduced models that retain the essential physical mechanisms while significantly reducing computational complexity. 

In this section, we introduce a simplified formulation by modifying the dissipation functional. The resulting reduced system, presented in Subsection~\ref{sec:Rayleigh_system}, preserves the energetic structure of the original formulation while providing a more tractable framework for analysis and computation. Furthermore, as an illustrative example, Subsection~\ref{sec:1D_graph} presents a one-dimensional model for a surfactant-laden sessile droplet represented by a graph, followed by numerical simulations of the resulting system (see Subsections~\ref{sec:algorithm} and \ref{sec:numerics}.)

\subsection{The coupled system with Rayleigh dissipation}
\label{sec:Rayleigh_system}

To obtain a reduced and computationally more tractable system from \eqref{sec:system}, we replace the viscous dissipation functional $\calD_{\mathrm{visc}}$ in \eqref{eq:D_visc} by the Rayleigh dissipation functional \citep{goldstein1950classical,GaoLiu2021IFB},
\begin{equation}
\label{eq:rayleighian_Rayleigh}
\calD_{\mathrm{Ray}}[V,\mathbf{v}_{\mathrm{tng}}] = 
\frac{\xi_n}{2}\int_{S_t} V^2 \dd \mathcal{H}^2 + \frac{\xi_{\tau}}{2}\int_{S_t}|\mathbf{v}_{\mathrm{tng}}|^2 \dd \mathcal{H}^2,
\end{equation}
where $\xi_n$ and $\xi_{\tau}$ are normal and tangential friction coefficients, respectively.
Then with the total dissipation functional defined as
\begin{equation}
\widetilde{\calD} = \calD_{\mathrm{Ray}} + \calD_{\cl} + \calD_{\mathrm{diff}},  
\end{equation}
where the definitions of $\calD_{\cl}$ and  $\calD_{\mathrm{diff}}$ are given in \eqref{eq:dissipation_cl} and \eqref{eq:dissi_surface}, and using the expression \eqref{eq:rate_energy_change_fsurf_Tm}, we rewrite the
\emph{Rayleighian} in \eqref{eq:R_def} as
\begin{align}
  \label{eq:R_def_Rayleigh}
  \widetilde{\calR}
  &= \frac{\dd}{\dd t}\calF
    + \widetilde{\calD}
    -\lambda\left(|\Omega_-(t)|-\mathrm{Vol}\right) \notag\\
  &=
  \int_{S_t}
\left(V \left(\kappa \gamma+
\left[\varepsilon (\partial_n \phi)^2 - \frac{\varepsilon}{2} |\nabla \phi|^2 \right]
\right)
+ \mathbf{v}_{\mathrm{tng}} \cdot \Gamma \nabla_S \left( A \phi + \mu \right) \right)\ud \mathcal{H}^2 \notag\\
& \quad +
 \int_{C_t} \left(
 \gamma(\Gamma) \cos\theta_{\cl}+\gamma_{SL}-\gamma_{SG}\right)v_{\cl}\,\ud\mathcal{H}^1 
  + \int_{S_t} \bfJ\cdot \nabla_S( A\phi+\mu) \, \ud\mathcal{H}^2
+\int_{S_t}\frac{|\bfJ|^2}{2\mathcal{M}(\Gamma)}\ud\mathcal{H}^2 \notag\\
& \quad +\frac{\xi_n}{2}\int_{S_t} V^2 \dd \mathcal{H}^2 + \frac{\xi_{\tau}}{2}\int_{S_t}|\mathbf{v}_{\mathrm{tng}}|^2 \dd \mathcal{H}^2 + \frac{\zeta}{2}\int_{C_t}|v_{\cl}|^2~\ud\mathcal{H}^1 
 - \lambda\left(|\Omega_-(t)|-\mathrm{Vol}\right).
\end{align}

The modified Rayleighian \eqref{eq:R_def_Rayleigh} incorporates contributions of surface energy and electrostatic energy to the rate of change of free energy $\tfrac{\dd}{\dd t}\calF$, and contributions from Rayleigh dissipation, contact-line friction, and surface diffusion dissipation to the total dissipation. The term $\lambda\left(|\Omega_-(t)|-\mathrm{Vol}\right)$ uses a Lagrange multiplier $\lambda$ to impose the conservation of volume, where $|\Omega_-(t)|$ represents the Lebesgue measure of the liquid domain, and $\mathrm{Vol}$ represents the prescribed volume of the liquid domain. We note that in the original formulation of Rayleighian \eqref{eq:R_def}, the liquid volume is preserved through the incompressibility constraint imposed through the pressure $p$ as the Lagrange multiplier.

Similar to the coupled system \eqref{eq:system} associated with the Rayleighian formulation \eqref{eq:R_def}, the governing equations associated with the modified Rayleighian are obtained by minimizing \eqref{eq:rayleighian_Rayleigh} with respect to the admissible variables,
\begin{equation}
  \label{eq:min_R_Ray}
  (\bfJ, V, \mathbf{v}_{\mathrm{tng}}, v_{\cl})
  = \operatorname*{arg\,min}_{\widetilde\bfJ,\widetilde{V}, \widetilde{\mathbf{v}}_{\mathrm{tng}},  \widetilde v_{\cl}}
    \widetilde{\calR}[\widetilde\bfJ, \widetilde{V}, \widetilde{\mathbf{v}}_{\mathrm{tng}}, \widetilde v_{\cl}].
\end{equation}
Minimizing $\widetilde{\calR}$ with respect to $V$ gives
\begin{equation}
    \xi_n V = -\left(\kappa \gamma+
\left[\varepsilon (\partial_n \phi)^2 - \frac{\varepsilon}{2} |\nabla \phi|^2 \right]
\right) + \lambda,
\label{eq:normal_velocity_Ray}
\end{equation}
where $\lambda$ is the Lagrange multiplier associated with volume preservation.
Similarly, minimizing $\widetilde{\calR}$  with respect to $\mathbf{v}_{\mathrm{tng}}$ leads to
\begin{equation}
    \xi_{\tau} \mathbf{v}_{\mathrm{tng}} = -\Gamma \nabla_S \left( A \phi + \mu \right).
\label{eq:tng_velocity_Ray}
\end{equation}
Taking the first variation of $\widetilde{\calR}$ with respect to $\lambda$ and setting it to zero gives the volume preservation condition
\begin{equation}
|\Omega_-(t)| = \mathrm{Vol}.
\label{eq:vol_preservation}
\end{equation}
Taking the first variation of $\widetilde{\calR}$ with respect to $\bfJ$ and $v_{\cl}$ leads to the tangential surfactant flux \eqref{eq:tangential_flux} and the contact-line condition \eqref{eq:ctl_condition}, respectively, which are identical to those in the system \eqref{eq:system}.

Combining equations  \eqref{eq:laplace} -- \eqref{eq:continuity_phi_u}, \eqref{eq:noflux_bc_Gamma}, \eqref{eq:transport_surfactant_complete}, \eqref{eq:normal_velocity_Ray} --\eqref{eq:vol_preservation}, we obtain the reduced coupled system. It consists of a single bulk equation for the electrostatic potential $\phi$, 
\begin{subequations}
\label{eq:system_ray}
\begin{alignat}{2}
  \divv(\varepsilon\grad\phi) &= 0
  &&\quad\text{in }\Omega_\pm(t),
  \label{eq:sys_Poisson_Ray}
\end{alignat}
the surfactant transport equation
\begin{alignat}{2}
\partial_t \Gamma + \operatorname{div}_S(\Gamma\bfu) &= - \operatorname{div}_S \bfJ, \qquad \bfJ = -\mathcal{M}(\Gamma)\grads(\mu + A\phi) &&\quad\text{on }S_t,
  \label{eq:sys_transport_Ray}
\end{alignat}
the liquid-air interface conditions
\begin{alignat}{2}
\xi_n V &= -\left(\kappa \gamma+
\left[\varepsilon (\partial_n \phi)^2 - \frac{\varepsilon}{2} |\nabla \phi|^2 \right]
\right) + \lambda,
 &&\quad\text{on }S_t,
\label{eq:sys_normal_Ray}\\
\xi_{\tau} \mathbf{v}_{\mathrm{tng}} &= -\Gamma \nabla_S \left( A \phi + \mu \right),
 &&\quad\text{on }S_t,
\label{eq:sys_tangent_Ray}\\
  [\phi] &= 0,\quad [\varepsilon\partial_n\phi] = q(\Gamma), 
  &&\quad\text{on }S_t,
  \label{eq:sys_phi_jump_Ray}\\
  \bfu &= V \bfn + \mathbf{v}_{\mathrm{tng}}
  &&\quad\text{on }S_t,
  \label{eq:Vn_interface_bc_Ray}
\end{alignat}
the substrate boundary condition
\begin{alignat}{2}
  \phi & = \phi_0(z) &&\quad \text{at } z = 0,
  \label{eq:phi0_interface_Ray}
\end{alignat}
the contact-line conditions
\begin{alignat}{2}
\zeta v_{\cl} &= -\gamma(\Gamma)\cos\theta_{\cl} -(\gamma_{SL}-\gamma_{SG}) && \quad \text{on } C_t 
\label{eq:contactLine_bc_Ray}\\
\bfJ\cdot \bfb &= 0 &&\quad \mbox{on } C_t,
\label{eq:noflux_bc_Gamma_Ray}
\end{alignat}
and the volume preservation constraint
\begin{alignat}{2}
|\Omega_-(t)| = \mathrm{Vol}.
  \label{eq:sys_vol_preservation}
\end{alignat}
\end{subequations}
The constitutive relations for the surface charge density $q(\Gamma)$, surface tension $\gamma(\Gamma)$, chemical potential $\mu(\Gamma)$, and mobility function $\mathcal{M}(\Gamma)$, are identical to those used in the system \eqref{eq:system}.

\subsection{A reduced one-dimensional system with graph representation}
\label{sec:1D_graph}
To illustrate the derived energetic variational framework, we consider a simplified setting in which a sessile droplet with a small (acute) contact angle $\theta_{\cl} < 90^{\circ}$ is placed on a one-dimensional solid substrate. In this case, the free surface $S_t$ can be represented as a graph through the height function $h$,
$
S_t = \{(x,h(x,t): x \in (a(t), b(t))\}, 
$
and the wetting region on the solid substrate is given by $S_w(t) = [a(t), b(t)]\subset\RR^1$.
The liquid domain is defined as
$
\Omega_-(t)=\{(x,z): a(t)<x<b(t),\ 0<z<h(x,t)\},
$
and the gas domain is given by 
$\Omega_+(t)=\Omega\setminus \overline{\Omega_-(t)}$. 
This representation allows us to reduce the original system defined over a moving interface to a PDE system on a one-dimensional substrate $z=0$, which significantly simplifies numerical computation.

Given the height function $h(x,t)$, it is useful to define the tangential and normal unit vectors of the liquid-air interface as
$
\bftau={(1,\partial_x h)}/{\sqrt{g}}$ and
$
\bfn={(-\partial_x h,1)}/{\sqrt{g}},
$
respectively, where $g = 1+(\partial_x h)^2$.
Then the velocity restricted on the interface $S_t$ can be written as
$
\bfu|_{S_t}=V_n\bfn+V_{\tau}\bftau,
$
and the Cartesian components of $\bfu$ are given by
\begin{equation}
\label{eq:u-components}
 u_1=\frac{-V_n\partial_x h+V_{\tau}}{\sqrt{g}},
 \qquad
 u_2=\frac{V_n+V_{\tau}\partial_x h}{\sqrt{g}}.
\end{equation}
Moreover, the evolution of the free surface satisfies the kinematic boundary condition
\begin{equation}
\label{eq:kinematic}
 \partial_t h=u_2-u_1 \partial_x h=\sqrt{g}V_n.
\end{equation}
The curvature of the free surface is given by
\begin{equation}
\label{eq:curvature}
 \kappa=-\frac{\partial_{xx} h}{(1+(\partial_x h)^2)^{3/2}}=-\frac{\partial_{xx} h}{g^{3/2}}.
\end{equation}

To rewrite the coupled system with Rayleigh dissipation \eqref{eq:system_ray} under the graph representation, we introduce pullback representations for the variables defined on the liquid-air interface $S_t$. Specifically, let $F(x,z,t)$ be an arbitrary scalar function defined in a neighborhood of $S_t$. Its restriction to the graph $z=h(x,t)$ is represented by the pullback function
\begin{equation}
\label{eq:pullback}
f(x,t):=F(x,h(x,t),t),\qquad x\in(a(t),b(t)).
\end{equation}
Then the surface gradient of the scalar function and the surface divergence of the velocity field can be written as
\begin{equation}
\label{eq:surface-ops}
 \grads F=\frac{\partial_x f}{\sqrt{g}}\,\bftau,
 \qquad
 \divS\bfu=\partial_sV_{\tau}+\kappa V_n
 =\frac{\partial_xV_{\tau}}{\sqrt{g}}+\kappa V_n,
\end{equation}
where $\partial_s=g^{-1/2}\partial_x$ denotes the arclength derivative.

We now define the pullback of the interfacial traces of $\phi$ and $\Gamma$ by
\begin{equation}
\label{eq:pullback1a}
\varphi(x,t):=\phi(x,h(x,t),t),\qquad 
c(x,t):=\Gamma(x,h(x,t),t),\qquad x\in(a(t),b(t)).
\end{equation}
The symbol $\phi$ is reserved for the bulk electrostatic potential in $\Omega_\pm(t)$, whereas $\varphi$ denotes its interfacial trace pulled back onto the interval $(a(t),b(t))$.
Similarly, $c$ denotes the pullback of the interfacial surfactant concentration $\Gamma$.

The height function satisfies the contact-point conditions
\begin{equation}
\label{eq:contact-points}
h(a(t),t)=h(b(t),t)=0.
\end{equation}
Differentiating these identities with respect to time yields
\begin{equation}
\label{eq:endpoint-kinematic}
\partial_t h(a(t),t)+a'(t)\partial_x h(a(t),t)=0,
\qquad
\partial_t h(b(t),t)+b'(t)\partial_x h(b(t),t)=0.
\end{equation}
The compatibility condition \eqref{eq:endpoint-kinematic} can also be derived from the velocity components in \eqref{eq:u-components} by setting $u_1 = v_{\cl}$ and $u_2 = 0$ at the contact points $x = a(t), b(t)$, where $v_{\cl}$ is the contact-line velocity. This condition is equivalent to \eqref{eq:contact_compatibility} in the general geometric setting.

Next, we rewrite the system \eqref{eq:system_ray} for the graph surface case.
\begin{subequations}
\label{eq:2D-system-graph}
In the bulk regions, the electrostatic potential still satisfies the Laplace equation 
\begin{equation}
-\divv(\varepsilon\grad\phi)=0 \qquad
\text{in }\Omega_\pm(t)
\label{eq:2D-system-graph-bulk}
\end{equation}
subject to the interfacial jump condition at the interface $z = h(x,t)$
\begin{equation}
[\phi]=0,
\qquad
[\varepsilon\partial_n\phi]=Ac,
\qquad \text{on }z=h(x,t).
\label{eq:2D-system-graph-jump}
\end{equation}
and the boundary condition on the solid substrate at $z = 0$
\begin{equation}
\phi(x,0,t)=\phi_0(x,t).
\label{eq:2D-system-graph-substratephi}
\end{equation}
Using the kinematic condition \eqref{eq:kinematic} and the definition of curvature \eqref{eq:curvature}, the normal force balance \eqref{eq:normal_velocity_Ray} becomes a scalar evolution equation for the height function $h(x,t)$,
\begin{equation}
\xi_n \partial_t h
=\gamma(c)\frac{\partial_{xx}h}{g}-
\sqrt{g}\mathcal{T}_M(x,t)+\lambda \sqrt{g}, \qquad x\in(a(t),b(t)),
\label{eq:2D-system-graph-normal}
\end{equation}
where the Maxwell stress contribution enters through the pull-back quantity $\mathcal{T}_M$ evaluated on the interface
\begin{equation}
\mathcal{T}_M(x,t) = \left[\varepsilon (\partial_n\phi)^2-\frac{\varepsilon}{2}|\grad\phi|^2\right]\Big|_{z=h(x,t)}.
\label{eq:2D-system_Maxwell}
\end{equation}
The tangential force balance \eqref{eq:sys_tangent_Ray} and the kinematic condition similarly reduce to
\begin{equation}
V_{\tau}=-\frac{c}{\xi_{\tau} \sqrt{g}}\,\partial_x(\mu(c)+A\varphi),\qquad 
V_n = \frac{\partial_t h}{\sqrt{g}},\qquad
x\in(a(t),b(t)),
\label{eq:2D-system-graph-tangent}
\end{equation}
where the expression of $V_{\tau}$ incorporates the combined effects of Marangoni stress and electrostatic forces on the tangential velocity.

It is useful to define the scalar pull-back of the diffusive surfactant flux $J$ as the tangential component of the diffusive flux $\bfJ$ along the interface, $J = \bfJ(x,h(x,t),t)\cdot \bftau$. Using the identity $\divS \bfJ = \tfrac{1}{\sqrt{g}}\partial_x J$,
the conservation of surfactant mass on an arbitrary surface patch $A_t \subset S_t$, given by \eqref{eq:surfactant_conservation}, pulls back to the integral balance
\begin{equation}
    \frac{\dd}{\dd t}\int_{\omega_t}c\sqrt{g}\,\dd x = -\int_{\omega_t} \frac{\partial_x J}{\sqrt{g}}\sqrt{g}\,\dd x = -\int_{\omega_t} \partial_x J\,\dd x,\notag
\label{eq:graph_surf_1}
\end{equation}
where $\omega_t$ denotes the projection of $A_t$ onto the substrate $S_w$.
Applying the surface transport theorem \eqref{eq:transport-graph} for graph surfaces (Lemma \ref{lemma:graph-surface} in Appendix~\ref{Appendix:surface_transport_graph}), we write
\begin{equation}
 \frac{\dd}{\dd t}\int_{\omega_t}c\sqrt{g}\,\dd x = \int_{\omega_t}\partial_t(c\sqrt{g})+\partial_x (c(V_{\tau}-V_n\partial_x h))\,\dd x,
\label{eq:graph_surf_2}\notag
\end{equation}
where we have used the velocity decomposition \eqref{eq:u-components}. 
Combining the above two identities and using the fact that $\omega_t$ is arbitrary, we obtain the pull-back form of the surfactant transport equation  \eqref{eq:sys_transport_Ray} as
\begin{equation}
\partial_t(c\sqrt{g})+\partial_x\bigl(c(V_{\tau}-V_n\partial_x h)\bigr)
= -\partial_x J, \qquad x\in(a(t),b(t)),
\label{eq:sys_transport_graph}
\end{equation}
where from \eqref{eq:sys_transport} the pull-back diffusive flux $J$ is given by
\begin{equation}
J = -\frac{\calM(c)}{\sqrt{g}}\,\partial_x(\mu(c)+A\varphi).
\nonumber
\end{equation}
In the special case when the mobility $\mathcal{M}$ takes the form of \eqref{eq:choice_M}, the diffusive flux $J$ becomes
\begin{equation}
    J = -\frac{1}{\sqrt{g}}\left(D\partial_x c + A\calM\partial_x\varphi\right).
\label{eq:2D-system-graph-transport}
\end{equation}

The no-flux boundary condition \eqref{eq:noflux_bc_Gamma_Ray} for the surfactant reduces to the boundary condition at the two contact points,
\begin{equation}
J = 0, \quad \mbox{ at } x = a(t), b(t).
\label{eq:2D-noflux_bc_Gamma_graph}
\end{equation}
Combining \eqref{eq:contact-points} and \eqref{eq:contactLine_bc_Ray}, we have the contact-line conditions
\begin{align}
h(a(t),t)&=h(b(t),t)=0,
\label{eq:2D-system-graph-contactpoint}\\
\zeta a'(t)&=\gamma(c(a(t),t)){\cos\theta_{\cl}^{\ell}}+(\gamma_{SL}-\gamma_{SG}),
\label{eq:2D-system-graph-CLleft}\\
\zeta b'(t)&=-\gamma(c(b(t),t)){\cos\theta_{\cl}^{r}}-(\gamma_{SL}-\gamma_{SG}),
\label{eq:2D-system-graph-CLright}
\end{align}
where $\cos\theta_{\cl}^{\ell}$ and $\cos\theta_{\cl}^{r}$ are the contact angles at the left and right contact points, $x = a(t)$ and $x = b(t)$, respectively. In the graph representation, we can also write $\cos\theta_{\cl}^{\ell} = g^{-1/2}(a(t),t)$ and $\cos\theta_{\cl}^{r} = g^{-1/2}(b(t),t)$.
Finally, conservation of liquid volume is expressed through the integral constraint on the height function,
\begin{align}
\int_{a(t)}^{b(t)} h(x,t)\,\dd x&=\mathrm{Vol}.
\label{eq:2D-system-graph-volume}
\end{align}
\end{subequations}
The coupled system \eqref{eq:2D-system-graph} provides the desired reduced framework consisting of a two-dimensional electrostatic problem coupled with a one-dimensional moving-boundary system for pullback variables.
Specifically, the bulk unknown remains $\phi(x,z,t)$ in the two-dimensional domain $\Omega_\pm(t)$, and the interfacial variables $h(x,t)$, $c(x,t)$, and $\varphi(x,t)$ are reduced to the one-dimensional interval $(a(t),b(t))$.

\subsection{\texorpdfstring{Algorithm for graph system \eqref{eq:2D-system-graph}}{Algorithm for the graph system}}
\label{sec:algorithm}
Next, we present a semi-discrete algorithm (Algorithm \ref{alg:1}) based on a first-order implicit-explicit (IMEX) time-stepping scheme for solving the graph system \eqref{eq:2D-system-graph}. Following the work by \cite{GaoLiu2021IFB}, the moving grid, contact-line locations, height function, and surfactant concentration are updated sequentially at each time step. 
The Arbitrary Lagrangian-Eulerian (ALE) method \citep{hirt1974arbitrary} is used to remap the variables associated with the moving domain. The evolution equation for the height function \eqref{eq:2D-system-graph-normal} is discretized using an IMEX algorithm, and the surfactant transport equation \eqref{eq:sys_transport_graph} is solved using a fully implicit backward Euler scheme with Newton's iteration.

\vspace{0.2in}
\newcounter{algorithmcounter}

\refstepcounter{algorithmcounter}
\noindent
\begin{minipage}{\textwidth}
\hrule
\vspace{0.5em}
\noindent\textbf{Algorithm~\thealgorithmcounter.}
IMEX algorithm for the graph system~(\ref{eq:2D-system-graph}).
\label{alg:1}
\vspace{0.5em}
\hrule
\vspace{0.75em}
\begin{algorithmic}[1]
\STATE Initialize $\phi^0, h^0, c^0, a^0$ and $b^0$ at time $t^0 = 0$.
\STATE \textbf{for} $n=0,1,\cdots$ \textbf{do}
\STATE \quad \quad Given $\phi^n, h^n, c^n, a^n$ and $b^n$ at time step $t^n$, update $\phi^{n+1}, h^{n+1}, c^{n+1}, a^{n+1}$ and $b^{n+1}$ at time step $t^{n+1}$ as follows:
\STATE \quad \quad 
Update $a^{n+1}$ and $b^{n+1}$ using the contact-line equations (\ref{eq:2D-system-graph-CLleft}) and (\ref{eq:2D-system-graph-CLright}) with the explicit forward Euler scheme.
\STATE \quad \quad  Perform a linear remapping of the variables associated with the moving domain from $x\in[a^n, b^n]$ to $\hat{x}\in[a^{n+1}, b^{n+1}]$, such that $\rho^n(x) = \hat{\rho}^n(\hat{x})$, where $\rho^n \in \{h^n, c^n, g^n, \mathcal{T}_{M}^n\}$.
\STATE \quad \quad  
Update the height function from $\hat{h}^n(\hat{x})$ to $h^{n+1}(\hat{x})$ for $\hat{x} \in (a^{n+1}, b^{n+1})$ using (\ref{eq:2D-system-graph-normal}), (\ref{eq:2D-system_Maxwell}), (\ref{eq:2D-system-graph-volume}), and (\ref{eq:2D-system-graph-contactpoint})  with the scheme
\begin{equation}
\begin{cases}
\displaystyle\xi_n \frac{h^{n+1}(\hat{x})-\hat{h}^n(\hat{x})}{\Delta t}
=\frac{\gamma(\hat{c}^n)}{\hat{g}^{n}(\hat{x})}\partial_{xx}h^{n+1}(\hat{x})-
\sqrt{\hat{g}^n}\hat{\mathcal{T}}^n_M(\hat{x})+\lambda^{n+1} \sqrt{\hat{g}^n}(\hat{x}), 
\\
\int_{a^{n+1}}^{b^{n+1}} h^{n+1}(\hat{x},t)\,\dd \hat{x}=\mathrm{Vol},\\
h^{n+1}(a^{n+1}) = h^{n+1}(b^{n+1}) = 0.
\end{cases}
\end{equation}
\STATE \quad \quad  Update the surfactant concentration from $\hat{c}^n(\hat{x})$ to $c^{n+1}(\hat{x})$ using (\ref{eq:2D-system-graph-tangent}) -- (\ref{eq:2D-noflux_bc_Gamma_graph}) with the scheme
\begin{equation}
\begin{cases}
\displaystyle
\frac{(c^{n+1}\sqrt{g^{n+1}})(\hat{x})-(\hat{c}^n\sqrt{\hat{g}^{n}})(\hat{x})}{\Delta t} + \partial_{\hat{x}}\left({c}^{n+1}({V}^{n+1}_{\tau}-V_n^{n+1}\partial_{\hat{x}}h^{n+1})\right)(\hat{x}) = -\partial_{\hat{x}}J^{n+1}(\hat{x}), \\
\displaystyle J^{n+1}(\hat{x}) = -\frac{{\mathcal{M}}^{n+1}}{\sqrt{{g}^{n+1}}}\partial_{\hat{x}}(\mu(c^{n+1})+A\hat{\varphi}^n),\\
J^{n+1}(a^{n+1}) = J^{n+1}(b^{n+1}) = 0.
\end{cases}
\end{equation}
\STATE \quad \quad Update the electrostatic potential from $\phi^n$ to $\phi^{n+1}$ by solving the system (\ref{eq:2D-system-graph-bulk}) -- (\ref{eq:2D-system-graph-substratephi})
\begin{equation}
\begin{cases}
    -\divv(\varepsilon_{\pm}\grad\phi^{n+1})=0
\qquad \text{in }\Omega_\pm^{n+1},\\
[\phi^{n+1}]=0,
\qquad
[\varepsilon_{\pm}\partial_n\phi^{n+1}]=Ac^{n+1}
\qquad \text{on }z=h^{n+1},\\
\phi^{n+1} = \phi^{n+1}_0 \qquad \mbox{at } z = 0,\\
\mbox{Far-field boundary condition on } \partial \Omega.
\end{cases}
\end{equation}
\end{algorithmic}
\hrule
\vspace{0.5em}
\end{minipage}

\subsection{Numerical results}
\label{sec:numerics}
Next, we present numerical simulations of the reduced graph system \eqref{eq:2D-system-graph} to demonstrate the energetic variational framework and investigate the coupled effects of surfactant transport and electrostatic forcing on droplet dynamics. The initial droplet profile $h(x,t)$ is chosen as a symmetric sessile droplet with prescribed volume, and the initial surfactant concentration $c(x,t)$ is assumed to be spatially uniform,
\begin{equation}
    \label{eq:ic}
h(x,0) = -2(x-0.5)(x+0.5), \qquad c(x,0) = 0.5.
\end{equation}
The numerical simulations are carried out using the IMEX scheme described in Algorithm \ref{alg:1}. The electrostatic problem for $\phi$ is computed in the bulk domain $\Omega_{\pm}$ within the fixed computational domain $\Omega = [-1,1]\times[0,1]$. The height function $h(x,t)$ and the surfactant concentration $c(x,t)$ are evolved on the moving interval $(a(t), b(t))$, where the initial support at $t=0$ is given by $(a(0),b(0)) = (-0.5,0.5)$.
The other system parameters are $\xi_{\tau} =\xi_n= 1$, $\zeta=2$, $\gamma_{\mathrm{SL}}-\gamma_{\mathrm{SG}}=-0.7$, $\varepsilon_+=5$, $\varepsilon_-=1$, $\Gamma_s = k = T = 1$, $\gamma_0 = 2$, and $D = 0.1$.

We focus on two representative configurations of the substrate potential: a homogeneous potential and an inhomogeneous potential. The first case serves as a baseline example and examines the influence of electrostatic forcing in a spatially symmetric setting, and the second case investigates how spatially varying electric fields induce droplet motion and redistribution of surfactant.

\begin{figure}
    \centering
    \includegraphics[width=0.45\linewidth]{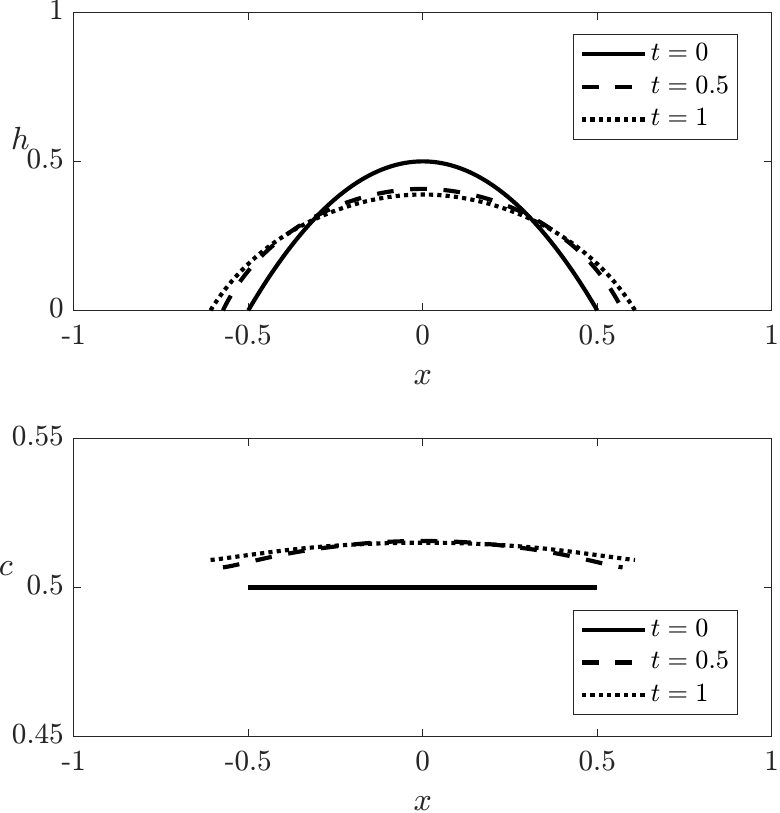}
    \includegraphics[width=0.48\linewidth]{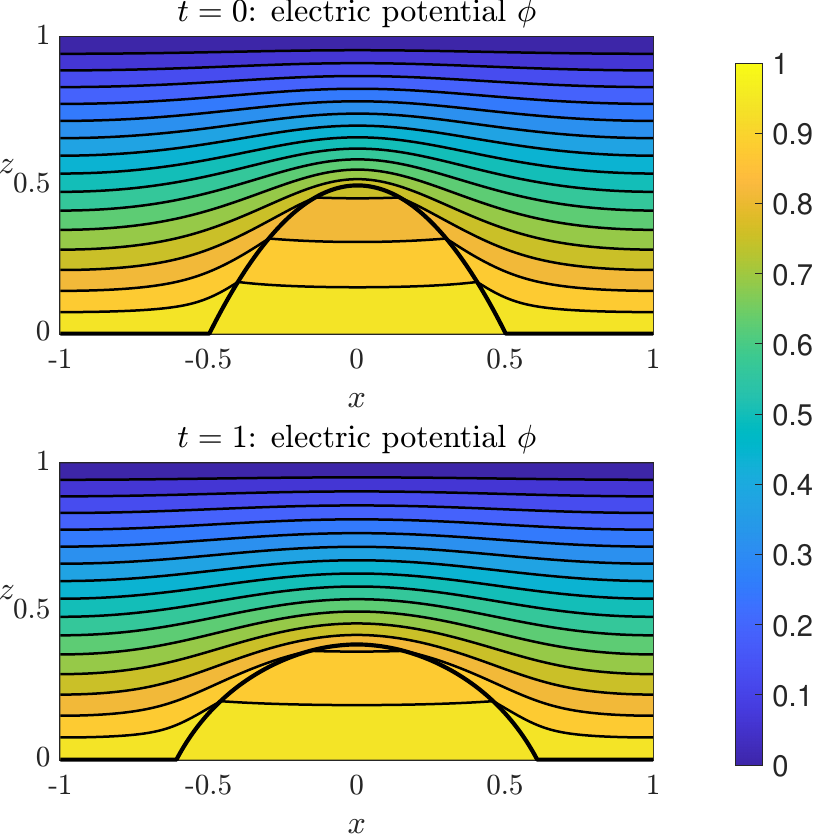}
    \caption{Droplet spreading under a homogeneous substrate electrostatic potential $\phi_0\equiv 1$. (Left) The height function $h(x,t)$ and pull-back surfactant concentration $c(x,t)$ at time $t = 0, 0.5, 1$; (Right) the electrostatic potential $\phi$ at time $t = 0$ and $t = 1$. The parameter $A = 0.1$.}
    \label{fig:phi_bottom_1}
\end{figure}

\begin{figure}
    \centering
    \includegraphics[width=0.45\linewidth]{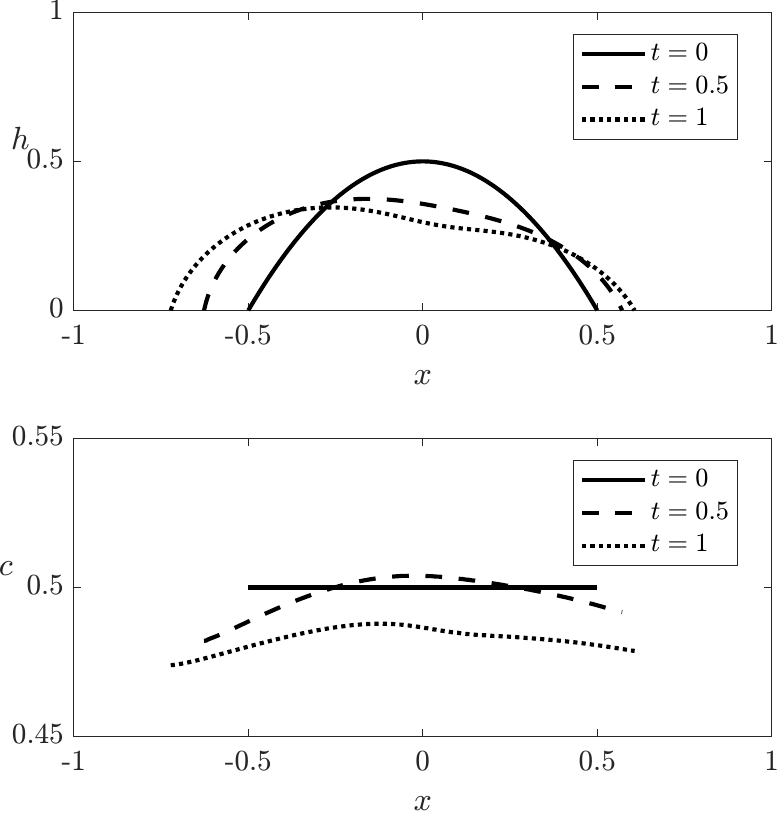}
    \includegraphics[width=0.48\linewidth]{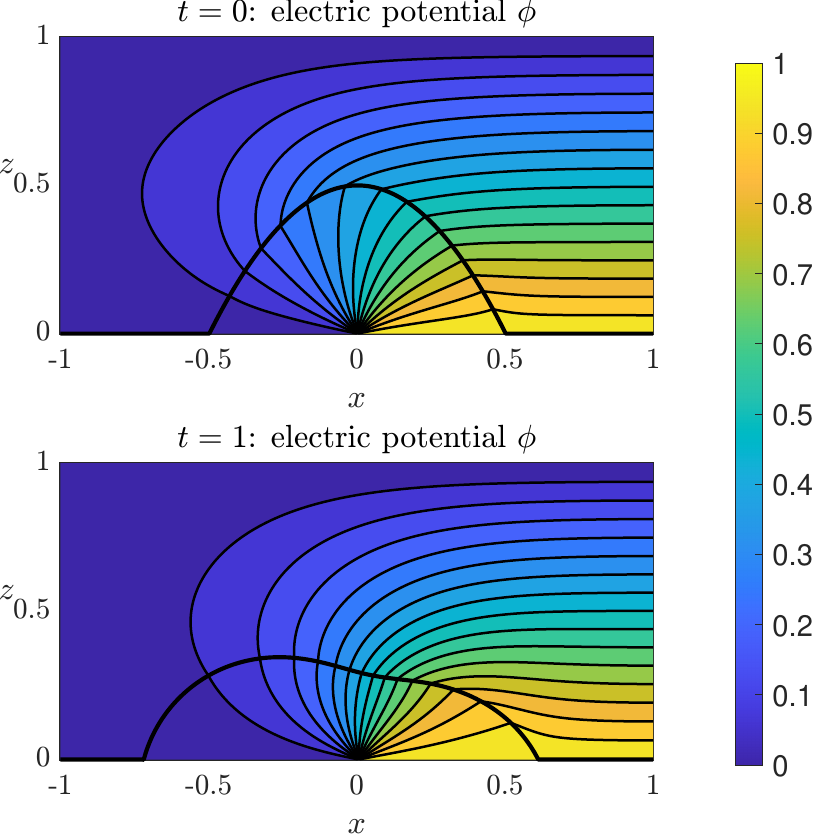}
    \caption{Droplet transport induced by the inhomogeneous electric field \eqref{eq:boundary_phi_inhomogeneous}, showing leftward droplet migration accompanied by surfactant redistribution. The other settings are identical to those in Figure~\ref{fig:phi_bottom_1}.}
    \label{fig:phi_left0_right1}
\end{figure}
Figure~\ref{fig:phi_bottom_1} presents the simulation of the sessile droplet evolving from the initial condition \eqref{eq:ic}, undergoing symmetric spreading on the solid substrate from time $t = 0$ to $t = 1$. A homogeneous boundary potential $\phi_0(x,0) \equiv 1$ is prescribed at the substrate $z=0$, and a Dirichlet boundary condition $\phi = 0$ is imposed at the top boundary of the computational domain, $z = 1$. Neumann boundary conditions $\phi_x = 0$ are imposed at the left and right boundaries, $x = -1$ and $x = 1$. The left panel of Figure~\ref{fig:phi_bottom_1} depicts the temporal evolution of the height function $h$ and surfactant concentration $c$ at times $t = 0, 0.5,$ and $1$, showing that the droplet spreads under the combined effects of motion by mean curvature, Marangoni forces, and electrostatic effects. The pull-back concentration $c$ increases over time as the geometric factor $\sqrt{g} = \sqrt{1+h_x^2}$ decreases during droplet spreading. 
The right panel of Figure~\ref{fig:phi_bottom_1} shows the electric potential distribution $\phi$ at the initial and final times, $t = 0$ and $t = 1$, respectively, both of which remain symmetric about $x = 0$. Throughout the evolution, the droplet preserves its symmetry while relaxing toward an equilibrium state. The electrostatic contribution modifies the balance between capillary and Maxwell stresses, contributing to the deformation of the free surface. Meanwhile, surfactant redistribution occurs due to the coupling between chemical potential gradients and electrostatic effects.

Next, we consider droplet dynamics induced by a spatially varying substrate potential, 
\begin{equation}
    \phi_0(x, 0) =
\begin{cases}
    0, \quad &-1 \le x < 0\\
    1, \quad &0\le x < 1
\end{cases}.
\label{eq:boundary_phi_inhomogeneous}
\end{equation}
Unlike the homogeneous case, the imposed substrate potential generates an electric field that breaks the left-right symmetry of the system.
Figure~\ref{fig:phi_left0_right1} illustrates the evolution of the droplet profile and the corresponding surfactant concentration under this inhomogeneous electric field, while all other settings are identical to those in Figure~\ref{fig:phi_bottom_1}. The asymmetry in electrostatic forcing creates tangential stresses along the interface and induces nonuniform surfactant transport. Specifically, since the surfactant concentration remains nearly uniform, the electrostatic contribution dominates the tangential force balance.
Consequently, the nonuniform substrate potential induces a negative tangential velocity $V_{\tau}$ over most of the droplet profile through \eqref{eq:2D-system-graph-tangent}. 
The droplet therefore exhibits directional motion toward the left driven by electrostatic forcing, accompanied by the dynamic redistribution of surfactant concentration (see the left panel). The right panel of Figure~\ref{fig:phi_left0_right1} shows the corresponding asymmetric electric potential $\phi$ in the computational domain generated by the inhomogeneous boundary potential.

\begin{figure}
    \centering
    \includegraphics[height=1.8in]{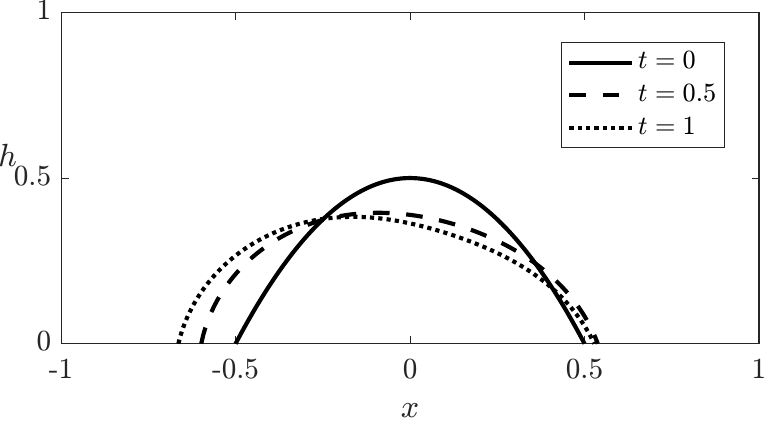}\quad
    \includegraphics[height=1.8in]{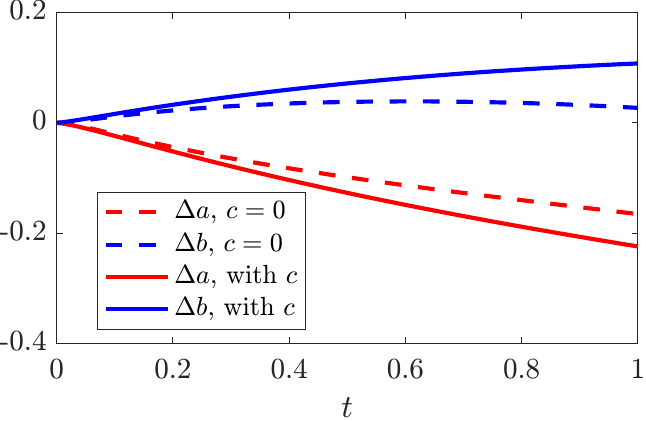}
    \caption{(Left) Droplet migration driven solely by Maxwell stresses. (Right) Comparison of the contact-line displacements $\Delta a(t) = a(t) - a(0)$ and $\Delta b(t) = b(t) - b(0)$ for a surfactant-laden droplet subject to coupled Maxwell stresses and Marangoni stresses (solid curves; Figure~\ref{fig:phi_left0_right1}) and a clean droplet driven solely by Maxwell stresses (dashed curves; Figure~\ref{fig:H_xab_c_comparison} (left)).}
    \label{fig:H_xab_c_comparison}
\end{figure}

The simulation in Figure~\ref{fig:phi_left0_right1} suggests that spatially varying electric fields can provide an effective mechanism for inducing droplet transport through the combined effects of Maxwell stresses and surfactant-mediated interfacial dynamics. To better illustrate the  respective roles of Maxwell stresses and surfactant transport in droplet migration, Figure~\ref{fig:H_xab_c_comparison} (left) presents the dynamics of a clean droplet without surfactant, obtained by setting the surfactant concentration $c(x,0) \equiv 0$ while taking the same initial height function $h(x,0)$ given by \eqref{eq:ic}. In this case, the governing system \eqref{eq:2D-system-graph} reduces to the electrostatic problem \eqref{eq:2D-system-graph-bulk} -- \eqref{eq:2D-system-graph-substratephi}, coupled with the height evolution equation \eqref{eq:2D-system-graph-normal} -- \eqref{eq:2D-system_Maxwell}, and the contact-line conditions and the volume constraint \eqref{eq:2D-system-graph-contactpoint} -- \eqref{eq:2D-system-graph-volume}. Consequently, the directional motion of the clean droplet is driven solely by Maxwell stresses. Comparing the snapshots in Figure~\ref{fig:phi_left0_right1} (top left) and Figure~\ref{fig:H_xab_c_comparison} (left) shows that, although both droplets exhibit a tendency to migrate toward the left, the clean droplet in Figure~\ref{fig:H_xab_c_comparison} (left) remains noticeably more rounded. This observation is consistent with the higher surface tension of the clean droplet compared to that of the surfactant-laden droplet. Moreover, Figure~\ref{fig:H_xab_c_comparison} (right) compares the contact-line displacements, $\Delta a(t) = a(t) - a(0)$ and $\Delta b(t) = b(t) - b(0)$. The results indicate that the presence of surfactant promotes droplet spreading during migration by reducing the overall surface tension, while the surfactant concentration gradient generates an additional Marangoni force that enhances directional transport. As a result of these combined effects, the advancing edge of the surfactant-laden droplet advances further to the left than that of the clean droplet.

Finally, we investigate the influence of the parameter $A$ in the constitutive relation \eqref{eq:q_linear_Gamma} on the electrohydrodynamics of surfactant-laden droplets. Figure~\ref{fig:HC_A_comparison} compares the evolution of both the height function $h(x,t)$ and the surfactant concentration $c$ for negatively charged surfactant $(A = -5)$ and positively charged surfactant $(A = 5)$, showing that changing the sign of $A$ can lead to either droplet wetting or dewetting in the case of symmetric droplet evolution. 
The parameter $A$ characterizes the dependence of the interfacial charge density on the surfactant concentration through the linear relation \eqref{eq:q_linear_Gamma} and therefore determines how the electric field couples to surfactant transport. Specifically, the sign of $A$ distinguishes different types of surfactants. When $A < 0$, the surfactant molecules carry a net negative charge (anionic surfactant), while the case $A > 0$ corresponds to positively charged (cationic) surfactants. We keep all the other settings identical to those in Figure~\ref{fig:phi_bottom_1}.

\begin{figure}
    \centering
    \includegraphics[width=0.48\linewidth]{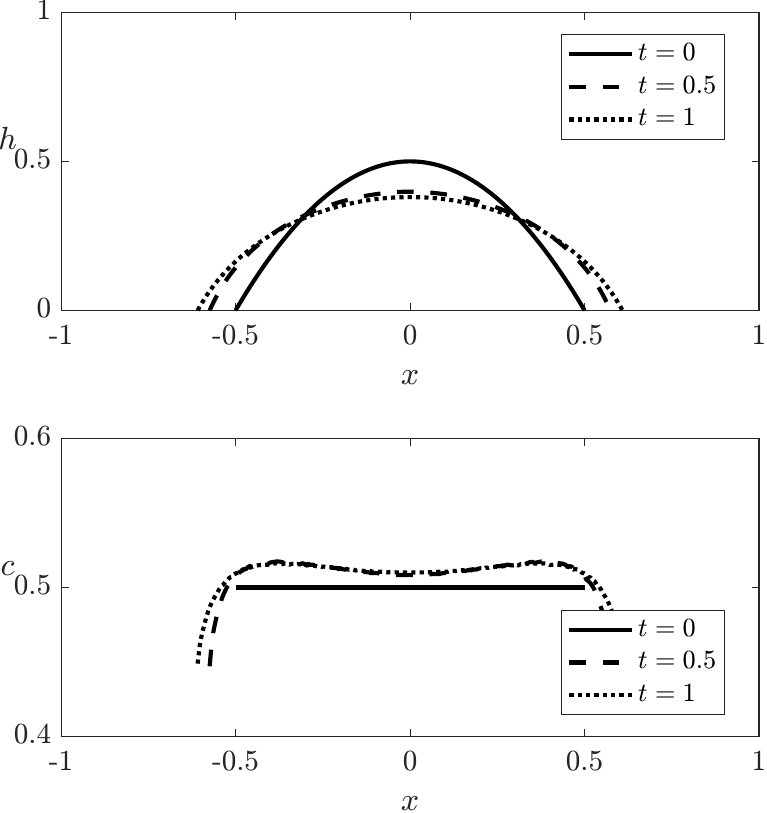}
    \includegraphics[width=0.48\linewidth]{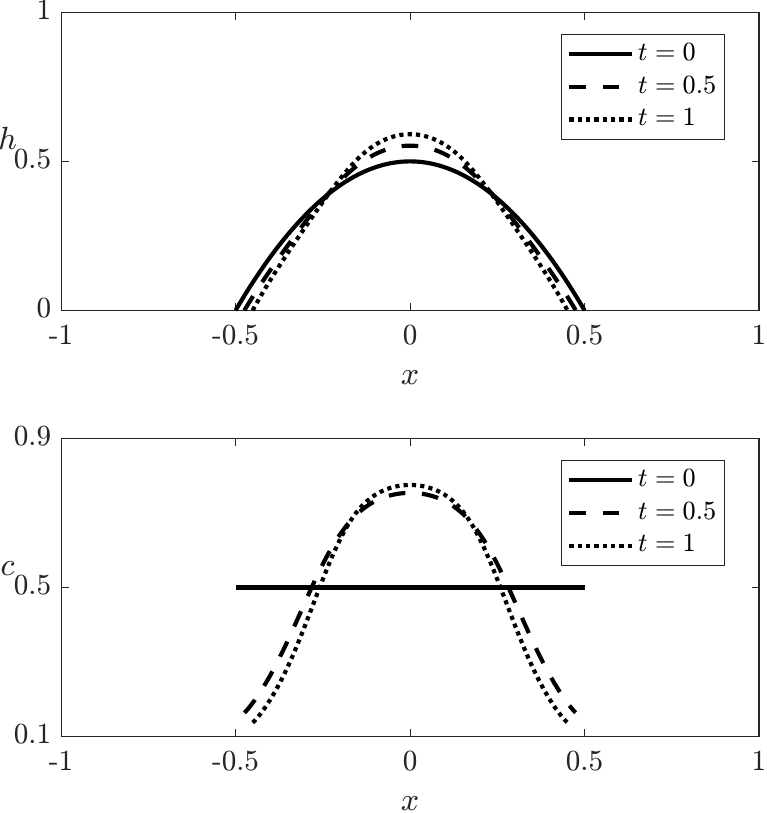}
    \caption{Comparison of the evolution of the height function $h(x,t)$ and the surfactant concentration $c(x,t)$ for (left) negatively charged surfactant $(A = -5)$ and (right) positively charged surfactant $(A = 5)$. The other parameters are identical to those in Figure~\ref{fig:phi_bottom_1}.}
    \label{fig:HC_A_comparison}
\end{figure}
The plots in Figure~\ref{fig:HC_A_comparison} show that, in both cases, the interfacial surfactant concentration $c$ is redistributed away from the contact lines of the droplet. However, the resulting droplet dynamics and surfactant distribution differ substantially because the Maxwell stress contribution to the tangential velocity in \eqref{eq:2D-system-graph-tangent} depends on the sign of the parameter $A$. Specifically, for $A = 5$, the surfactant concentration develops larger local spatial variations than the profile shown in Figure~\ref{fig:phi_bottom_1}, with more surfactant accumulating near the peak of the droplet. The associated surface tension gradient generates a Marangoni stress that acts against further surfactant accumulation in this region. In contrast, for $A = -5$, the surfactant concentration remains more uniformly distributed along the interface than in the positively charged case and exhibits a convex profile near the droplet peak, resulting in weaker surface tension gradients and consequently weaker Marangoni effects. The combined effects of Maxwell stresses and Marangoni stresses lead to rapid droplet spreading in the negatively charged case, while the positively charged surfactant case exhibits droplet dewetting.
These results demonstrate that the interplay between Maxwell and Marangoni stresses, and the resulting droplet dynamics, is strongly influenced by the constitutive relation \eqref{eq:q_linear_Gamma} and the parameter $A$.

\section{Concluding remarks}
\label{sec:conclusion}
The energetic variational framework developed in this work provides a unified formulation for droplet dynamics coupled with surfactant transport and electrostatic effects. The governing equations \eqref{eq:system} are derived directly from the interplay between free energy variations and dissipation mechanisms, and the reduced formulation \eqref{eq:system_ray} obtained through Rayleigh dissipation further provides a computationally tractable model while keeping the dominant physical effects. The graph representation of the model \eqref{eq:system_ray} further reduces the problem to a coupled system \eqref{eq:2D-system-graph} for the height function and the surfactant concentration defined on a one-dimensional domain.

The graph system \eqref{eq:2D-system-graph} shares similarities with lubrication-type models for thin films and surfactant-mediated electrohydrodynamics of droplets \citep{chu2023electrohydrodynamics}. Under the lubrication approximation, the moving interface is represented by a height function together with a long-wave assumption, resulting in reduced coupled equations for the height function and surfactant concentration coupled through a depth-averaged velocity. The proposed graph formulation \eqref{eq:2D-system-graph} may be viewed as a more geometrically general counterpart that does not rely on small-slope assumptions and can therefore be applied to sessile droplets with large contact angles. Moreover, classical lubrication models for droplet dynamics typically require regularization mechanisms to resolve contact-line singularities, such as precursor-film models with disjoining pressure or slip boundary conditions \citep{RevModPhys.57.827}. In contrast, the energetic formulation discussed in the present work incorporates contact-line dynamics through a dissipation mechanism associated with contact-line friction and avoids the introduction of additional regularization parameters.

Moreover, in many situations, sharp-interface models may be formally recovered as the sharp-interface limit of diffuse-interface models as the diffuse-interface thickness tends to zero \citep{qian2006variational}. It would be of interest to investigate whether the present moving contact-line law and associated Rayleighian structure can be derived systematically as the sharp-interface limit of an energetic variational phase-field model.

The present work adopts the linear relation $q(\Gamma) = A\Gamma$ (see \eqref{eq:q_linear_Gamma}) to model the linear dependence of the surface charge density on the surfactant concentration. This constitutive relation assumes that surfactant molecules carry electric charge, thereby directly coupling charge transport and surfactant transport. Other electrostatic models may also be incorporated within the present framework, such as the leaky dielectric model widely used in electrohydrodynamic systems \citep{Saville1997,papageorgiou2019film,nganguia2019effects}.

The present formulation is also restricted to the Stokes regime and neglects inertia effects. While this assumption is appropriate for many microfluidic applications involving slowly evolving droplets, inertia may become important for rapid transport processes and droplet coalescence.
Recent work by \cite{wray2022electrostatic} studied the electrostatic control in thin-film flows governed by the Navier-Stokes equations. Extending the current framework to incorporate inertia effects would be an interesting direction for future work.

From a numerical perspective, the electrostatic potential satisfies a Laplace equation in a moving domain, and efficient solution techniques become more important for large-scale simulations. For two-dimensional problems, for example, boundary integral methods \citep{sorgentone20193d} may provide an attractive alternative to bulk discretization approaches. Moreover, the development of structure-preserving numerical schemes that inherit the energy-dissipation relation associated with the Rayleighian \eqref{eq:R_def} at the discrete level would also be beneficial for large-scale long-time simulations.

\section*{Acknowledgments}
H. Ji's work is supported by NSF DMS-2309774. 


\section*{Data availability statement}
Datasets and code generated during the current study are available in the GitHub repository \cite{EHD-code}.

\appendix
\section{Proof of Claim \ref{claim:elec_bulk}}
\label{sec:proof_claim2}
\begin{proof}
Using the velocity decomposition \eqref{eq:velocity_decomp}, we rewrite the surface material derivative of the electric potential $\phi$ as
\[
\DtS\phi=\partial_t\phi+\bfu\cdot\nabla\phi
=\partial_t\phi+V\,\bfn\cdot\nabla\phi+\vtan\cdot\nabla_S\phi.
\]
Since $\partial_{\bfn}\phi:=\bfn\cdot\nabla\phi$, we obtain
\begin{equation}
\DtS\phi=\partial_t\phi+V\partial_{\bfn}\phi+\vtan\cdot\nabla_S\phi.
\label{eq:surf_material_derivative}
\end{equation}
Applying \eqref{eq:surf_material_derivative} to the electric potential $\phi_{\pm}$ on both sides of the interface $S_t$ leads to
\begin{equation}
\DtS\phi
=
\partial_t\phi_{\pm}+ \bfu\cdot\nabla\phi_\pm
=
\partial_t\phi_{\pm}+V \partial_{\bfn}\phi_\pm + \bfv_{\mathrm{tng}}\cdot\nabla_S\phi,
\label{eq:surfaceMaterialDerivative}
\end{equation}
which implies
\begin{equation}
\label{eq:phi_t_total}
\partial_t\phi_{\pm}
=
\DtS\phi-V \partial_{\bfn}\phi_\pm -\bfv_{\mathrm{tng}} \cdot\nabla_S\phi.
\end{equation}
Substituting \eqref{eq:phi_t_total} into the left-hand side of \eqref{eq:claim}, together with the jump condition \eqref{eq:jump_q}, yields
\begin{align}
\mathrm{LHS} &=
-\varepsilon_+\partial_n\phi_+\bigl(\DtS\phi-V\partial_{\bfn}\phi_+ -\bfv_{\mathrm{tng}}\cdot\nabla_S\phi\bigr)
\notag\\
&\qquad\qquad
+\varepsilon_-\partial_n\phi_-\bigl(\DtS\phi-V \partial_{\bfn}\phi_- -\bfv_{\mathrm{tng}}\cdot\nabla_S\phi \bigr)
-\Bigl[\frac{\varepsilon}{2}|\nabla\phi|^2\Bigr]V \notag\\
&=-q\,\DtS\phi
+
q\,\bfv_{\mathrm{tng}}\cdot\nabla_S\phi
+
\Bigl[\varepsilon(\partial_n\phi)^2-\frac{\varepsilon}{2}|\nabla\phi|^2\Bigr]V.
\end{align}

Next, we decompose the electric field $\bfE$ into its normal and tangential components,
\[
\bfE_\pm = \bfE_{\tau} + E_{n,\pm}\bfn,
\qquad
\bfE_{\tau}=-\nabla_S\phi,
\qquad
E_{n,\pm}=-\partial_n\phi_\pm.
\]
Using the definition of Maxwell stress \eqref{eq:maxwell_stress}, we write the Maxwell traction as
\begin{equation}
\bfT_{\mathrm{M},\pm}\bfn = \left( \varepsilon_\pm\bfE_\pm \otimes \bfE_\pm  -
\frac12 \varepsilon_\pm|\bfE_\pm|^2 \bfI \right)\bfn
=
\varepsilon_\pm
\left(
E_{n,\pm}\bfE_\pm-
\frac12|\bfE_\pm|^2\bfn
\right).
\label{eq:maxwell_traction}
\end{equation}
Using the velocity decomposition \eqref{eq:velocity_decomp},
we compute the normal and tangential contributions of the jump in Maxwell traction across the interface and express them in terms of the electric potential $\phi$,
\begin{equation}
[\bfTM]\bfn\cdot \bfn
=
\Bigl[\varepsilon E_n^2-\frac12\varepsilon|\bfE|^2\Bigr]
=
\Bigl[\varepsilon(\partial_n\phi)^2-\frac{\varepsilon}{2}|\nabla\phi|^2\Bigr],
\label{eq:maxwell-traction-identity}
\end{equation}
and
\begin{equation}
[\bfTM]\bfn\cdot \bfv_{\mathrm{tng}}
=
[\varepsilon E_n]\,\bfE_\tau\cdot 
\bfv_{\mathrm{tng}}
=
q\,\bfv_{\mathrm{tng}}\cdot\nabla_S\phi ,
\label{eq:maxwell-traction-identity2}
\end{equation}
where we have used $[\varepsilon E_n]=-[\varepsilon\partial_n\phi]=-q$ and $\bfE_\tau=-\nabla_S\phi$.
Combining \eqref{eq:maxwell-traction-identity} and \eqref{eq:maxwell-traction-identity2}, we rewrite the right-hand side of \eqref{eq:claim} as
\begin{align}
    \mbox{RHS} &= [\bfTM]\bfn\cdot \bfn V + [\bfTM]\bfn\cdot \bfv_{\mathrm{tng}} -q\,\DtS\phi \notag\\
    &=\Bigl[\varepsilon(\partial_n\phi)^2-\frac{\varepsilon}{2}|\nabla\phi|^2\Bigr]V + q\,\bfv_{\mathrm{tng}}\cdot\nabla_S\phi
-q\,\DtS\phi = \mbox{LHS}.
\end{align}
This completes the proof of Claim \ref{claim:elec_bulk}.
\end{proof}

\section{Surface transport theorem}
\label{sec:appendix_surf}
In this appendix, we present the surface transport theorem in Lemma~\ref{ref:lemma_surface} for a general moving interface $S_t$. We then rewrite the theorem in graph coordinates for the special case in which the interface $S_t$ admits a graph representation, $S_t = \{(x,y,h(x,y,t)) : (x,y)\in S_w\subset \RR^2\}$; see Lemma~\ref{lemma:graph-surface} in Appendix~\ref{Appendix:surface_transport_graph}.

\begin{lemma}[Surface transport theorem]
\label{ref:lemma_surface}
Let $S_t \subset \mathbb{R}^3$ be a smooth evolving surface with velocity field $X$ and outward normal unit vector $\bfn$, and let $f(\cdot,t)$ be a scalar function defined in a neighborhood of $S_t$.
Let $A_t \subset S_t$ be a material surface patch moving with the flow.
Then
\begin{equation}
\frac{d}{dt}\int_{A_t} f\, d\mathcal{H}^2
= \int_{A_t} \left(\partial_t f + X\cdot \nabla f  + f\, \operatorname{div}_S X\right)\, d\mathcal{H}^2.
\label{eq:general_derivative_patch}
\end{equation}
\end{lemma}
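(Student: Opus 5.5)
The plan is to pull the integral back to a fixed reference patch via the flow map, so that all time dependence sits in the integrand and the Jacobian; the theorem then reduces to differentiating these two factors.

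\textbf{Step 1 (flow map and change of variables).} Let $\Phi_t:A_0\to A_t$ be the flow map generated by $X$, so that $\partial_t\Phi_t(\mathbf{y})=X(\Phi_t(\mathbf{y}),t)$ and $\Phi_0=\mathrm{id}$. Let $J_t(\mathbf{y})$ denote the surface Jacobian of $\Phi_t$ with respect to $\mathcal{H}^2$ on $A_0$. The area formula gives
\[
\int_{A_t} f\,d\mathcal{H}^2=\int_{A_0} f(\Phi_t(\mathbf{y}),t)\,J_t(\mathbf{y})\,d\mathcal{H}^2(\mathbf{y}).
\]
Since $A_0$ is fixed, we may differentiate under the integral sign, assuming smoothness of $S_t$, $X$ and $f$.

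\textbf{Step 2 (derivative of the integrand).} By the chain rule,
\[
\frac{d}{dt}f(\Phi_t(\mathbf{y}),t)=\partial_t f+X\cdot\nabla f .
\]
This is where the full ambient gradient enters: $f$ is defined in a neighborhood of $S_t$, and $X$ may have a normal component. This reproduces the material derivative structure of \eqref{eq:surface_material_derivative}.

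\textbf{Step 3 (Jacobi identity).} This is the main step. We must show that $\partial_t J_t=(\operatorname{div}_S X)\circ\Phi_t\,J_t$, which is the pointwise form of \eqref{eq:geometric_identity_Jacobi_0}. The plan is as follows.
\begin{itemize}
\item Take local coordinates $(s_1,s_2)$ on $A_0$ and write $J_t=\sqrt{\det G(t)}/\sqrt{\det G(0)}$, where $G_{ij}=\partial_i\Phi_t\cdot\partial_j\Phi_t$.
\item Using $\partial_t\partial_i\Phi_t=\partial_i(X\circ\Phi_t)=\nabla X\,\partial_i\Phi_t$, obtain
\[
\partial_t G_{ij}=\partial_i\Phi_t\cdot(\nabla X+\nabla X^{T})\,\partial_j\Phi_t .
\]
\item Jacobi's formula $\partial_t\sqrt{\det G}=\tfrac12\sqrt{\det G}\,\mathrm{tr}(G^{-1}\partial_t G)$ then yields
\[
\partial_t J_t=J_t\,\mathrm{tr}\big((\bfI-\bfn\otimes\bfn)\nabla X\big),
\]
because $G^{ij}\partial_i\Phi_t\otimes\partial_j\Phi_t$ is exactly the tangential projector $\bfI-\bfn\otimes\bfn$.
\item Finally, identify $\mathrm{tr}\big((\bfI-\bfn\otimes\bfn)\nabla X\big)$ with $\operatorname{div}_S X$ for a general (not necessarily tangential) $X$, which is consistent with \eqref{eq:identity_div_u}.
\end{itemize}
The delicate points are keeping the projector bookkeeping straight and handling the normal part of $X$; that part contributes $\kappa V$ automatically through the trace.

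\textbf{Step 4 (conclusion).} Combine Steps 2 and 3 under the integral over $A_0$ and push forward back to $A_t$ via the area formula. This gives \eqref{eq:general_derivative_patch}.
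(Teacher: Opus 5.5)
Your proposal is correct and follows the same route as the paper's proof: pull the integral back to the reference patch with the flow map $\Phi_t$, apply the chain rule to $f\circ\Phi_t$ to get $\partial_t f + X\cdot\nabla f$, and use the Jacobi identity $\partial_t J = J\,\operatorname{div}_S X$. The only difference is that the paper simply invokes this Jacobi identity, whereas your Step 3 derives it correctly from the metric $G_{ij}$ and the identification $G^{ij}\partial_i\Phi_t\otimes\partial_j\Phi_t = \bfI-\bfn\otimes\bfn$.
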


If, in addition, $f(\cdot, t)$ also satisfies $\bfn\cdot \nabla f = 0$, then we refer to $f$ as an \emph{insoluble} quantity, and equation \eqref{eq:general_derivative_patch} reduces to
\begin{equation}
\frac{d}{dt}\int_{A_t} f\, d\mathcal{H}^2
= \int_{A_t} \left(\partial_t f + X\cdot \nabla_S f  + f\, \operatorname{div}_S X\right)\, d\mathcal{H}^2.
\label{eq:insoluble_derivative_patch}
\end{equation}
\begin{remark}
The two forms of the surface transport theorem are used in different contexts throughout this paper:
\begin{itemize}
\item The form in \eqref{eq:general_derivative_patch} applies to fields defined in $\RR^3$, such as the electric potential $\phi$.
\item The form in \eqref{eq:insoluble_derivative_patch} applies to quantities constrained to the moving interface $S_t$, such as the insoluble surfactant concentration $\Gamma$. In this case, $\Gamma$ is locally extended to a neighborhood of $S_t$ so that the condition $\bfn\cdot \nabla \Gamma = 0$ is satisfied \citep{Stone1990}.
\end{itemize}
\end{remark}

\begin{proof}
Let $\Phi_t:S_0 \to S_t$ be the flow map generated by the velocity field $X$, and let 
\[
J_s(x,t) = \det\big(D\Phi_t(x)|_{T_x S_0}\big)
\]
denote the tangential Jacobian, which measures local change in surface area.

We rewrite the integral on the reference surface as
\begin{equation}
I(t) = \int_{A_t} f\, d\mathcal{H}^2
= \int_{A_0} f(\Phi_t(x),t)\, J_s(x,t)\, d\mathcal{H}^2(x).
\end{equation}
Differentiating with respect to time yields
\begin{align}
\frac{d}{dt} I(t)
&= \int_{A_0}
\left[
\frac{d}{dt} f(\Phi_t(x),t)\, J_s(x,t)
+ f(\Phi_t(x),t)\, \frac{\partial J_s}{\partial t}(x,t)
\right] d\mathcal{H}^2(x).
\end{align}
Applying the chain rule along trajectories gives
\[
\frac{d}{dt} f(\Phi_t(x),t)
= \partial_t f + X \cdot \nabla f,
\]
and the Jacobi's formula
\begin{equation}
\frac{1}{J_s}\frac{\partial J_s}{\partial t} = \operatorname{div}_S X,
\end{equation}
we obtain
\begin{align}
\frac{d}{dt} I(t)
&= \int_{A_t}
\left[
\partial_t f + X \cdot \nabla f + f\, \operatorname{div}_S X
\right] d\mathcal{H}^2.
\end{align}

If $f$ is insoluble and defined only on the surface $S_t$, then only its tangential derivative contributes to the integral, and we have
\[
X \cdot \nabla f = X \cdot \nabla_S f.
\]
This completes the proof.
\end{proof}

\section{Surface transport theorem for a graph surface}
\label{Appendix:surface_transport_graph}
We consider an evolving surface in $\RR^3$ represented as a graph,
\[
S_t = \{(x,y,h(x,y,t)) : (x,y)\in S_w\subset \RR^2\},
\]
where $h:S_w\times [0,T]\to\RR$ is smooth. The surface moves with the ambient velocity field 
$
\bfv=(v_x,v_y,v_z).
$
We write
$
\mathbf r(x,y,t)=(x,y,h(x,y,t))
$
for the graph parameterization.
The tangent vectors of the graph are
$
\mathbf r_x=(1,0,\partial_xh)$ and
$\mathbf r_y=(0,1,\partial_y h)$.
Hence we have
$
\mathbf r_x\times \mathbf r_y = (-\partial_x h,-\partial_y h,1),
$
and the area element on $S_t$ is
\begin{equation}\label{eq:area-element}
\dd\mathcal H^2 = \sqrt{g}\,\dd x\dd y,
\qquad \mbox{where }
g=|\mathbf r_x\times \mathbf r_y|^2 = {1+|\nabla h|^2}.
\end{equation}
The corresponding unit normal is
\begin{equation}\label{eq:normal}
\bfn = \frac{(-\partial_x h,-\partial_y h,1)}{\sqrt{1+|\nabla h|^2}}.
\end{equation}

Let $\phi(x,y,z,t)$ be a scalar function defined in a neighborhood of $S_t$. Its restriction to the graph is given by the pullback
\begin{equation}\label{eq:pullback_appdix}
\varphi(x,y,t):=\phi(x,y,h(x,y,t),t).
\end{equation}
By the chain rule,
\begin{equation}\label{eq:chain-rule}
\partial_t\varphi = \partial_t\phi + \partial_t h\,\partial_z\phi,
\qquad
\partial_x\varphi = \partial_x\phi + h_x\,\partial_z\phi,
\qquad
\partial_y\varphi = \partial_y\phi + h_y\,\partial_z\phi,
\end{equation}
where all derivatives of $\phi$ on the right-hand side are evaluated at $(x,y,h(x,y,t),t)$.

Now let $A_t\subset S_t$ be a material surface patch. Its projection onto the $(x,y)$-plane is
\[
\omega_t:=\{(x,y)\in S_w:(x,y,h(x,y,t))\in A_t\}.
\]
Since $A_t$ is material, the projected domain $\omega_t$ moves with the planar velocity field
\begin{equation}\label{eq:w-def}
\bfw=(u,v),
\qquad
u(x,y,t)=v_x(x,y,h(x,y,t),t),
\qquad
v(x,y,t)=v_y(x,y,h(x,y,t),t).
\end{equation}
To verify this, let $(x(t),y(t),h(x(t),y(t),t))$ be a material point on the surface. Differentiating gives
\[
\frac{\dd}{\dd t}\mathbf r(x(t),y(t),t)
=
\bigl(x'(t),y'(t),h_t+x'(t)h_x+y'(t)h_y\bigr).
\]
Since this equals $\bfv=(v_x,v_y,v_z)$ along the surface, we obtain
\begin{equation}\label{eq:kinematic-system}
x'(t)=u=v_x,
\qquad
y'(t)=v=v_y,
\qquad
\partial_t h+u \partial_x h+v \partial_y h=v_z.
\end{equation}
The last identity is the graph kinematic condition governing the evolution of the height function $h$.

\begin{remark}[{A useful identity for the pullback material derivative}]
Define the planar material derivative by
\begin{equation}\label{eq:Dt-def}
D_t:=\partial_t+u\partial_x+v\partial_y.
\end{equation}
Using \eqref{eq:chain-rule} and \eqref{eq:kinematic-system}, we compute
\begin{align}
D_t\varphi
&= \partial_t\varphi + u\partial_x\varphi + v\partial_y\varphi \notag\\
&= \bigl(\partial_t\phi+h_t\partial_z\phi\bigr)
+u\bigl(\partial_x\phi+h_x\partial_z\phi\bigr)
+v\bigl(\partial_y\phi+h_y\partial_z\phi\bigr) \notag\\
&= \partial_t\phi + u\partial_x\phi + v\partial_y\phi
+\bigl(h_t+uh_x+vh_y\bigr)\partial_z\phi \notag\\
&= \partial_t\phi + v_x\partial_x\phi + v_y\partial_y\phi + v_z\partial_z\phi \notag\\
&= \partial_t\phi + \bfv\cdot\nabla\phi.
\label{eq:Dtphi-ambient}
\end{align}
Thus the material derivative of the pullback on $\omega_t$ agrees with the ambient material derivative of $\phi$ evaluated on the surface.
\end{remark}

\begin{remark}[Jacobi's formula for surface area]
For an arbitrary surface patch $A_t \subset S_t$ and its projection $\omega_t \subset S_w$, we consider the rate of change of the surface area $|A_t|$. Using the Reynolds transport theorem, we have
\begin{equation}
    \frac{\dd}{\dd t}|A_t| = \frac{\dd}{\dd t}\int_{\omega_t}\sqrt{g}~\dd x\dd y = \int_{\omega_t}\left(D_t(\sqrt{g})+\sqrt{g}\nabla_{x,y}\cdot \bfw\right)\dd x \dd y.
\label{eq:rate_of_change_At}
\end{equation}
Using the geometric identity \eqref{eq:geometric_identity_Jacobi_0}, we have
\begin{equation}
    \frac{\dd}{\dd t}|A_t| = \int_{\Omega_t} \sqrt{g}\divS \bfv \dd x \dd y.
\label{eq:geometric_identity_Jacobi}
\end{equation}
Combining \eqref{eq:rate_of_change_At} and \eqref{eq:geometric_identity_Jacobi} and using the fact that $A_t$ is an arbitrary surface patch, we obtain Jacobi's formula for surface area
\begin{equation}
    D_t(\sqrt{g})+\sqrt{g}\nabla_{x,y}\cdot \bfw = \sqrt{g}\divS \bfv.
\label{eq:Jacobi_surface_elements}
\end{equation}
\end{remark}

We now present the graph version of the surface transport theorem.

\begin{lemma}[Surface transport theorem for graph surfaces]
\label{lemma:graph-surface}
Let
$
S_t = \{(x,y,h(x,y,t)):(x,y)\in S_w\}
$
be a smooth evolving graph surface moving with ambient velocity field $\bfv=(v_x,v_y,v_z)$. The planar transport velocity of the projected domain is given by the horizontal component of the ambient velocity evaluated on the graph
$
\bfw=(u,v)=\bigl(v_x,v_y\bigr)\big|_{z=h(x,y,t)}.
$
Let $A_t\subset S_t$ be a material surface patch, let $\omega_t\subset S_w$ be its projection, and let $\phi$ be a scalar function defined in a neighborhood of $S_t$. Define the pullback $\varphi$ by $\varphi(x,y,t) =\phi(x,y,h(x,y,t), t)$.

The Reynolds transport theorem
in graph coordinates states that
\begin{equation}\label{eq:transport-graph}
\frac{\dd}{\dd t}\int_{\omega_t}\varphi \sqrt{g}\,\dd x\dd y
=
\int_{\omega_t}\bigl(D_t(\varphi\,\sqrt{g})+\varphi \sqrt{g}\,\nabla_{x,y}\cdot\bfw\bigr)\,\dd x\dd y.
\end{equation}
The formulation \eqref{eq:transport-graph} is equivalent to
\begin{equation}\label{eq:transport-surface}
\frac{\dd}{\dd t}\int_{A_t}\phi\,\dd\mathcal H^2
=
\int_{A_t}\bigl(\partial_t\phi+\bfv\cdot\nabla\phi+\phi\,\divS\bfv\bigr)\,\dd\mathcal H^2.
\end{equation}
If, in addition, $\bfn\cdot\nabla\phi=0$ on $S_t$, then
\begin{equation}\label{eq:transport-insoluble}
\frac{\dd}{\dd t}\int_{A_t}\phi\,\dd\mathcal H^2
=
\int_{A_t}\bigl(\partial_t\phi+\bfv\cdot\grads\phi+\phi\,\divS\bfv\bigr)\,\dd\mathcal H^2.
\end{equation}
\end{lemma}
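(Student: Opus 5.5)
The plan has three parts: prove the graph-coordinate identity \eqref{eq:transport-graph} by a planar Reynolds transport argument, transfer it to the surface form \eqref{eq:transport-surface} by a pointwise change of variables, and then specialize to the insoluble case.

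\textbf{Step 1: planar Reynolds transport.} Since $A_t$ is material, its projection $\omega_t$ is transported by the planar velocity $\bfw=(u,v)$; this is exactly the content of \eqref{eq:kinematic-system}. Writing $\dd\mathcal H^2=\sqrt{g}\,\dd x\dd y$ by \eqref{eq:area-element}, I will apply the classical two-dimensional Reynolds transport theorem to the integrand $F=\varphi\sqrt{g}$ on the moving planar region $\omega_t$. Concretely, pull $\omega_t$ back to $\omega_0$ through the planar flow map $\Psi_t$ of $\bfw$, differentiate $F(\Psi_t(\cdot),t)\det D\Psi_t$ in time, and use the planar Jacobi formula $\partial_t\det D\Psi_t=(\nabla_{x,y}\cdot\bfw)\det D\Psi_t$. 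This gives \eqref{eq:transport-graph} directly, with $D_t$ as defined in \eqref{eq:Dt-def}.

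\textbf{Step 2: from graph coordinates to the surface form.} To obtain \eqref{eq:transport-surface}, I expand $D_t(\varphi\sqrt g)=\sqrt g\,D_t\varphi+\varphi D_t\sqrt g$. The two previously established identities then finish the computation:
\begin{itemize}
\item \eqref{eq:Dtphi-ambient} gives $D_t\varphi=\partial_t\phi+\bfv\cdot\nabla\phi$, evaluated on the graph;
\item the Jacobi formula \eqref{eq:Jacobi_surface_elements} gives $D_t\sqrt g+\sqrt g\,\nabla_{x,y}\cdot\bfw=\sqrt g\,\divS\bfv$.
\end{itemize}
The integrand therefore becomes $\sqrt g\,(\partial_t\phi+\bfv\cdot\nabla\phi+\phi\,\divS\bfv)$ composed with the graph map. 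Changing variables back with $\dd\mathcal H^2=\sqrt g\,\dd x\dd y$ turns the integral over $\omega_t$ into an integral over $A_t$. The left-hand sides agree by the same change of variables. The argument reverses step by step, so the two formulations are equivalent.

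\textbf{Step 3: the insoluble case.} If $\bfn\cdot\nabla\phi=0$, then $\nabla\phi=\grads\phi+(\bfn\cdot\nabla\phi)\bfn=\grads\phi$ on $S_t$. Hence $\bfv\cdot\nabla\phi=\bfv\cdot\grads\phi$, and \eqref{eq:transport-insoluble} follows.

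\textbf{Main obstacle.} The delicate point is the Jacobi formula \eqref{eq:Jacobi_surface_elements}, because the paper derives it from \eqref{eq:geometric_identity_Jacobi_0} via the same kind of transport argument, which risks circularity. To avoid this, I would verify it directly. Compute $D_t\sqrt g=g^{-1/2}\nabla h\cdot D_t\nabla h$, and use the kinematic condition $h_t+\bfw\cdot\nabla h=v_z$ to express $D_t\nabla h=\nabla v_z-(\nabla\bfw)^T\nabla h$, where $v_z$ and $\bfw$ are pulled back to the graph. Then check that the result equals $\sqrt g\,\divS\bfv-\sqrt g\,\nabla_{x,y}\cdot\bfw$. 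This check uses the expression of $\divS\bfv$ in the parametrization, $\divS\bfv=g^{ij}\partial_i\mathbf r\cdot\partial_j\bfv$, together with the pullback chain rule \eqref{eq:chain-rule}. If that computation becomes messy, the fallback is to rely directly on the abstract Lemma~\ref{ref:lemma_surface}: both sides of \eqref{eq:transport-graph} equal $\frac{\dd}{\dd t}\int_{A_t}\phi\,\dd\mathcal H^2$ rewritten in coordinates, and equivalence then requires only the change of variables.
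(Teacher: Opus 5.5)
Your proposal is correct and follows the paper's proof essentially step for step. The shared steps are: planar Reynolds transport for $\varphi\sqrt g$ on the material domain $\omega_t$, the product rule $D_t(\varphi\sqrt g)=\sqrt g\,D_t\varphi+\varphi D_t\sqrt g$ combined with \eqref{eq:Dtphi-ambient} and \eqref{eq:Jacobi_surface_elements}, the change of variables $\dd\mathcal H^2=\sqrt g\,\dd x\dd y$, and $\nabla\phi=\grads\phi$ in the insoluble case. Your optional direct check of \eqref{eq:Jacobi_surface_elements} goes through, since with $v_z,\bfw$ pulled back to the graph one gets $\divS\bfv=\nabla_{x,y}\cdot\bfw+g^{-1}\bigl(\nabla h\cdot\nabla v_z-\nabla h\cdot(\nabla\bfw)\nabla h\bigr)$, which equals your expression for $g^{-1/2}D_t\sqrt g+\nabla_{x,y}\cdot\bfw$. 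The paper instead imports \eqref{eq:Jacobi_surface_elements} from \eqref{eq:geometric_identity_Jacobi_0}, which is just the $\phi\equiv1$ case of \eqref{eq:transport-surface}, so your check makes the lemma self-contained rather than fixing a genuine circularity.
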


\begin{proof}
Using \eqref{eq:area-element} and \eqref{eq:pullback_appdix}, we obtain
\begin{equation}\label{eq:pull-integral}
\int_{A_t}\phi\,\dd\mathcal H^2
=
\int_{\omega_t}\varphi \sqrt{g}\,\dd x\dd y.
\end{equation}
Since $\omega_t$ is a material planar domain transported by $\bfw=(u,v)$, the planar Reynolds transport theorem gives
\begin{equation}\label{eq:reynolds-planar}
\frac{\dd}{\dd t}\int_{\omega_t} \mathcal{G}\,\dd x\dd y
=
\int_{\omega_t}\bigl(\partial_t \mathcal{G}+\nabla_{x,y}\cdot(\mathcal{G}\bfw)\bigr)\,\dd x\dd y
=
\int_{\omega_t}\bigl(D_t \mathcal{G}+\mathcal{G}\,\nabla_{x,y}\cdot\bfw\bigr)\,\dd x\dd y
\end{equation}
for any smooth scalar field $\mathcal{G}$ defined on $\omega_t$.

Applying this identity with $\mathcal{G}=\varphi \sqrt{g}$ yields
\begin{align}
\frac{\dd}{\dd t}\int_{\omega_t}\varphi \sqrt{g}\,\dd x\dd y
&=
\int_{\omega_t}\Bigl(D_t(\varphi \sqrt{g})+\varphi \sqrt{g}\,\nabla_{x,y}\cdot\bfw\Bigr)\,\dd x\dd y \notag\\
&=
\int_{\omega_t}\Bigl(\sqrt{g} D_t\varphi+\varphi(D_t (\sqrt{g})+ \sqrt{g}\nabla_{x,y}\cdot\bfw)\Bigr)\,\dd x\dd y.
\label{eq:expand-product}
\end{align}
Substituting the identities \eqref{eq:Dtphi-ambient} and \eqref{eq:Jacobi_surface_elements} into \eqref{eq:expand-product}, we obtain
\[
\frac{\dd}{\dd t}\int_{\omega_t}\varphi \sqrt{g}\,\dd x\dd y
=
\int_{\omega_t}\Bigl((\partial_t\phi+\bfv\cdot\nabla\phi)\sqrt{g}+\varphi \sqrt{g}\,\divS\bfv\Bigr)\,\dd x\dd y.
\]
Since $\varphi=\phi$ on the surface and $\sqrt{g}\,\dd x\dd y=\dd\mathcal H^2$, we obtain
\[
\frac{\dd}{\dd t}\int_{A_t}\phi\,\dd\mathcal H^2
=
\int_{A_t}\bigl(\partial_t\phi+\bfv\cdot\nabla\phi+\phi\,\divS\bfv\bigr)\,\dd\mathcal H^2,
\]
which proves \eqref{eq:transport-surface}. Formula \eqref{eq:transport-graph} is simply the same identity written in graph coordinates.

If $\bfn\cdot\nabla\phi=0$ on $S_t$, then $\nabla\phi$ is tangent to the surface and hence
\[
\nabla\phi=\grads\phi
\qquad\text{on }S_t.
\]
Therefore
\[
\bfv\cdot\nabla\phi = \bfv\cdot\grads\phi,
\]
and substituting this into \eqref{eq:transport-surface} yields \eqref{eq:transport-insoluble}.
\end{proof}

Next, we consider the special case where the graph surface evolves with purely normal velocity. 
Let $\omega_t \subset \mathbb{R}^2$ denote the time-dependent domain obtained as the projection of a material surface patch $S_t$. Suppose the surface is given as the graph
$
\mathbf r(x,y,t) = (x,y,h(x,y,t))$, where $(x,y)\in \omega_t$,
and evolves with purely normal velocity. We then have the following result.
\begin{lemma}[Surface transport of a graph surface with purely normal velocity]
\label{lemma:surfaceTransport_path_graph}
Let $\phi=\phi(x,y,h(x,y,t), t)$ be a scalar field defined on $S_t$, and let $\varphi(x,y,t) =\phi(x,y,h(x,y,t), t)$ denote its pullback onto $\omega_t$. 
Assume that $S_t$ is a material patch (i.e., it moves with the flow). Then
\begin{align}
\frac{d}{dt}\int_{\omega_t}  \varphi\,\sqrt{g} \,dx\,dy
&=
\int_{\omega_t}
\partial_t \left( \varphi\,\sqrt{g}\right) - \nabla \cdot \left(\varphi  \frac{ h_t   \nabla h}{\sqrt{g}}
\right) dx\,dy \notag\\
=
\int_{\omega_t}
&\left( 
\partial_t \varphi - \frac{ h_t }{\sqrt{g}} \nabla \cdot \left(  \frac{ \varphi\nabla h}{\sqrt{g}} \right)
\right)\,\sqrt{g}\,dx\,dy .
\label{eq:pure_normal_surface_transport}
\end{align}
\end{lemma}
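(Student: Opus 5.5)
The plan is to apply the planar Reynolds transport identity \eqref{eq:reynolds-planar} from the proof of Lemma~\ref{lemma:graph-surface}, with $\mathcal{G}=\varphi\sqrt{g}$. The only new ingredient is the explicit form of the planar transport velocity $\bfw$ when the motion is purely normal. Once $\bfw$ is known, the first equality is just the conservative form $\partial_t\mathcal{G}+\nabla\cdot(\mathcal{G}\bfw)$. The second equality then follows from a short product-rule expansion in which two terms cancel.

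\textbf{Step 1: identify $\bfw$.} Purely normal motion means $\bfv=V\bfn$ with $\bfn$ given by \eqref{eq:normal}. The kinematic condition in \eqref{eq:kinematic-system} reads $\partial_t h+\bfw\cdot\nabla h=v_z$, with $\bfw=-V\nabla h/\sqrt{g}$ and $v_z=V/\sqrt{g}$. Hence $\partial_t h=V(1+|\nabla h|^2)/\sqrt{g}=V\sqrt{g}$, so $V=h_t/\sqrt{g}$, consistent with \eqref{eq:kinematic}. Therefore
\[
\bfw=-\frac{h_t\,\nabla h}{g},\qquad \varphi\sqrt{g}\,\bfw=-\varphi\,\frac{h_t\nabla h}{\sqrt{g}}.
\]
Since $A_t$ is material, $\omega_t$ is transported by $\bfw$, as in Lemma~\ref{lemma:graph-surface}. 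Substituting into the first form of \eqref{eq:reynolds-planar} gives the first line of \eqref{eq:pure_normal_surface_transport}.

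\textbf{Step 2: expand and cancel.} Differentiating $g=1+|\nabla h|^2$ gives $\partial_t\sqrt{g}=\nabla h\cdot\nabla h_t/\sqrt{g}$, so
\[
\partial_t(\varphi\sqrt{g})=\sqrt{g}\,\partial_t\varphi+\varphi\,\frac{\nabla h\cdot\nabla h_t}{\sqrt{g}}.
\]
Pulling $h_t$ out of the divergence by the product rule gives
\[
\nabla\cdot\Bigl(\varphi\frac{h_t\nabla h}{\sqrt{g}}\Bigr)=h_t\,\nabla\cdot\Bigl(\frac{\varphi\nabla h}{\sqrt{g}}\Bigr)+\varphi\,\frac{\nabla h_t\cdot\nabla h}{\sqrt{g}}.
\]
Subtracting, the two $\varphi\,\nabla h\cdot\nabla h_t/\sqrt{g}$ terms cancel. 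Factoring out $\sqrt{g}$ then yields the second line of \eqref{eq:pure_normal_surface_transport}.

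\textbf{Main obstacle.} The main point requiring care is Step 1, namely getting the sign and the power of $g$ in $\bfw$ right; the rest is routine. One must also be clear that $\partial_t\varphi$ in the final formula is the time derivative of the pullback at fixed $(x,y)$, not $\partial_t\phi$. With that convention the identity is purely algebraic.
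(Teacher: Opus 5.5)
Your proposal is correct and follows essentially the same route as the paper: you identify the planar transport velocity $\mathbf{w}=-h_t\nabla h/g$ for purely normal motion, then apply the planar Reynolds transport theorem with $\mathcal{G}=\varphi\sqrt{g}$. You also fill in two steps the paper leaves implicit, namely deriving $V=h_t/\sqrt{g}$ from the graph kinematic condition and carrying out the product-rule cancellation that gives the second equality.
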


\begin{proof}
In the purely normal case,
\[
\mathbf v = V_n \mathbf n
= \frac{\partial_t h}{g}(-\partial_x h,-\partial_y h,1),
\]
and hence the projected planar velocity is given by
\begin{equation}
\mathbf w = \left(-\frac{\partial_t h \partial_xh}{g},\,
-\frac{\partial_t h \partial_yh}{g}\right).
\label{eq:w_graph}
\end{equation}
Let
$
\mathcal{G} = \varphi\,\sqrt{g}.
$
Applying the Reynolds transport theorem in the plane yields
\begin{equation}
\frac{d}{dt}\int_{S_t} \phi\,dS
=
\frac{d}{dt}\int_{\omega_t} \mathcal{G}\,dx\,dy
=
\int_{\omega_t}
\left(
\partial_t \mathcal{G} + \nabla \cdot (\mathcal{G}\,\mathbf w)
\right) dx\,dy.
\label{eq:identity_normal}
\end{equation}
Substituting the expression for $\mathbf{w}$ in \eqref{eq:w_graph} into \eqref{eq:identity_normal} gives the desired result \eqref{eq:pure_normal_surface_transport}.

\end{proof}

\bibliographystyle{unsrt}
\bibliography{EHD}  

\end{document}